\RequirePackage{amsmath}
\documentclass[runningheads,envcountsame]{llncs}
\usepackage{multicol}
\usepackage{amsfonts}
\usepackage{amssymb}
\usepackage{verbatim}
\usepackage{graphicx}
\usepackage{epic}
\usepackage{eepic}
\usepackage{epsfig,float}
\usepackage{pdfsync}
\usepackage{gastex}
\usepackage{url}

\DeclareMathOperator{\col}{col}
\DeclareMathOperator{\row}{row}
\DeclareMathOperator{\cols}{cols}

\usepackage{color}

\definecolor{aocolour}{rgb}{0.7,0.8,1}
\definecolor{gjcolour}{rgb}{1,0.8,0.7}

\usepackage{multicol}
\DeclareGraphicsRule{.tif}{png}{.png}{`convert #1 `dirname #1`/`basename #1 .tif`.png}

\renewcommand{\le}{\leqslant}
\renewcommand{\ge}{\geqslant}

\newcommand{\emp}{\emptyset}

\newcommand{\Sig}{\Sigma}

\newcommand{\noin}{\noindent}
\newcommand{\bi}{\begin{itemize}}
\newcommand{\ei}{\end{itemize}}
\newcommand{\be}{\begin{enumerate}}
\newcommand{\ee}{\end{enumerate}}
\newcommand{\bd}{\begin{description}}
\newcommand{\ed}{\end{description}}
\newcommand{\bq}{\begin{quote}}
\newcommand{\eq}{\end{quote}}

\def\shu{\mathbin{\mathchoice
{\rule{.3pt}{1ex}\rule{.3em}{.3pt}\rule{.3pt}{1ex}
\rule{.3em}{.3pt}\rule{.3pt}{1ex}}
{\rule{.3pt}{1ex}\rule{.3em}{.3pt}\rule{.3pt}{1ex}
\rule{.3em}{.3pt}\rule{.3pt}{1ex}}
{\rule{.2pt}{.7ex}\rule{.2em}{.2pt}\rule{.2pt}{.7ex}
\rule{.2em}{.2pt}\rule{.2pt}{.7ex}}
{\rule{.3pt}{1ex}\rule{.3em}{.3pt}\rule{.3pt}{1ex}
\rule{.3em}{.3pt}\rule{.3pt}{1ex}}\mkern2mu}}

\newcommand{\cA}{{\mathcal A}}

\newcommand{\cD}{{\mathcal D}}

\newcommand{\cK}{{\mathcal K}}

\newcommand{\cL}{{\mathcal L}}

\newcommand{\cN}{{\mathcal N}}

\newcommand{\cT}{{\mathcal T}}

\newcommand{\one}{{\mathbf 1}}

\title{On the State Complexity of the Shuffle\\of Regular Languages\thanks{This work was supported by the Natural Sciences and Engineering Research Council of Canada under grant No.~OGP0000871, by VEGA grant 2/0084/15, and by the National Science Centre, Poland under project number 2014/15/B/ST6/00615.}
}
\author{Janusz~Brzozowski \inst{1}, 
 Galina Jir\'askov\'a \inst{2},
Bo Liu \inst{1}\thanks{Present address: 
Google Inc.,
1600 Amphitheatre Parkway,
Mountain View, CA 94043,
USA.}, 
Aayush Rajasekaran\inst{1}, \and Marek Szyku{\l}a \inst{3}
}

\authorrunning{Brzozowski, Jir\'askov\'a, Liu, Rajasekaran, Szyku{\l}a}

\institute{David R. Cheriton School of Computer Science, University of Waterloo, \\
Waterloo, ON, Canada N2L 3G1\\
\{{\tt \{brzozo,  b23liu, arajasek\}@uwaterloo.ca}\}
\and
Mathematical Institute,
Slovak Academy of Sciences,\\
Gre\v s\'akova 6, 040 01 Ko\v sice, Slovakia \\
\email{jiraskov@saske.sk}
\and
Institute of Computer Science, University of Wroc{\l}aw,\\
Joliot-Curie 15, PL-50-383 Wroc{\l}aw, Poland\\
\{{\tt msz@cs.uni.wroc.pl}\}
}

\begin{document}
\maketitle
%\today
\begin{abstract}
We investigate the shuffle operation
on regular languages represented 
by complete deterministic finite automata.
We prove that 
$f(m,n)=2^{mn-1} + 2^{(m-1)(n-1)}(2^{m-1}-1)(2^{n-1}-1)$
is an upper bound on the state complexity
of the shuffle of two regular languages having
state complexities $m$ and $n$, respectively.
We also state partial results about the tightness of this bound.
We show that there exist witness languages meeting the bound 
if $2\le m\le 5$ and   $n\ge2$, and also if $m=n=6$.
Moreover, we prove that in the subset automaton of the NFA accepting the shuffle, all $2^{mn}$ states can be distinguishable, 
and an alphabet of size three suffices for that.
It follows that the bound can be met if all $f(m,n)$ states are reachable.
We know that an alphabet of size at least $mn$ 
is required provided that $m,n \ge 2$.
The question of reachability, and hence also of the tightness of the bound $f(m,n)$ in general, remains open.
\medskip

\noin
{\bf Keywords:} regular language, shuffle, state complexity, upper bound 
\end{abstract}

\section{An Upper Bound for the Shuffle Operation}
\label{sec:bound}

The \emph{state complexity of a regular language} $L$~\cite{Yu01} is the number of states in a complete minimal  deterministic finite automaton (DFA) recognizing the language; it
will be denoted by $\kappa(L)$.
The \emph{state complexity of an operation} on regular languages is the maximal state complexity of the result of the operation expressed as a function of the state complexities of the operands.

Let $\Sig$ be a finite non-empty alphabet. The \emph{shuffle} $u\shu v$ of words $u,v\in\Sig^*$ is defined as follows:
$$ u\shu v= \{u_1v_1\cdots u_kv_k \mid u=u_1\cdots u_k,  v=v_1\cdots v_k,
u_1,\ldots,u_k,v_1,\ldots, v_k\in \Sig^*\}.$$
The shuffle of two languages $K$ and $L$ over $\Sig$ is defined by
$$ K\shu L=\bigcup_{u \in K, v \in L} u\shu v.$$
Note that the shuffle operation is commutative on both words and languages.

The state complexity of the shuffle operation was first studied by C\^ampeanu, Salomaa, and Yu~\cite{CSY02}, but 
they considered only bounds for incomplete deterministic automata.
 In particular, they proved that $2^{mn}-1$ is a tight upper bound for that case.
Since we can convert an incomplete deterministic automaton into complete one by adding the empty state, it follows that $2^{(m-1)(n-1)}-1$ is a lower bound for the case of complete deterministic automata.
Here we show that this lower bound can be improved, and we derive an upper bound
for two regular languages
represented by complete deterministic automata,
but the question whether this bound is tight remains open.
 
A \emph{nondeterministic finite automaton} (NFA)
is a quintuple $\cA=(Q,\Sigma,\delta,s,F)$,
where $Q$ is a finite non-empty set of states,
$\Sigma$ is a finite alphabet of input symbols,
$\delta\colon Q\times\Sigma\to 2^Q$ is the transition function
which is extended to the domain $2^Q\times\Sigma^*$
in the natural way,
$s\in Q$ is the  initial state, and
$F\subseteq Q$ is the set of final states.
The language accepted by NFA $\cA$ is the set of words
$L(\cA)=\{w\in\Sigma^*\mid \delta(s,w)\cap F \neq \emptyset\}$.

An NFA $\cA$ is \emph{deterministic and complete} (DFA)
if $|\delta(q,a)|=1$
for each $q$ in $Q$ and each $a$ in $\Sigma$.
In such a case, we write  $\delta(q,a)=q'$ 
instead of $\delta(q,a)=\{q'\}$.
A DFA is \emph{minimal} (with respect to the number of states) if all its states are reachable,
and no two distinct states are equivalent.

Every NFA $\cA=(Q,\Sigma,\delta,s,F)$ can be converted
to an equivalent DFA $\cA'=(2^Q,\Sigma,\delta, \{s\},F')$,
where $F'=\{R\in2^Q \mid R\cap F\neq\emptyset\}$.
The DFA $\cA'$ is called
the \emph{subset automaton} of NFA $\cA$. 
The subset automaton may not be minimal since some of its states 
may be unreachable or equivalent  to other states.

Let $K$ and $L$ be regular languages
over an alphabet $\Sigma$
recognized by deterministic finite automata
$ \cK   =(Q_K,\Sig,\delta_K,q_K,F_K)$ 
and $ \cL =(Q_L,\Sig,\delta_L,q_L,F_L)$, respectively.
Then $K\shu L$ is accepted by the  nondeterministic finite automaton
$$\cN=(Q_K \times Q_L, \Sigma, \delta, (q_K,q_L), F_K\times F_L),$$  
where 
$$
     \delta((p,q),a)=\{(\delta_K(p,a),q), (p,\delta_L(q,a))\}.
$$
\label{nfaN}

Let $\cD=(2^{Q_K\times Q_L},\Sigma,\delta',\{(q_K,q_L)\},F')$ 
be the subset automaton of $\cN$.
If $|Q_K|=m$ and $|Q_L|=n$,
then NFA $\cN$
has $mn$ states.
It follows that 
DFA $\cD$ has at most $2^{mn}$
reachable and pairwise distinguishable states.
However, this upper bound cannot be met, as we will show.

In the sequel, we assume that $Q_K=\{1,2,\ldots,m\}$, $q_K=1$, 
$Q_L=\{1,2,\ldots,n\}$, and $q_L=1$.
We say that a state $(p,q)$ of NFA $\cN$
is in row $i$ if $p=i$,
and it is in column $j$ if $q=j$.

\begin{proposition}\label{pq}
 Let $a\in \Sigma$.
 Let $S$ be a state of $\cD$.
 Let $\pi_{\col}(S) = \{p \mid (p,q) \in S\text{ for some }q\}$, and
 $\pi_{\row}(S) = \{p \mid (p,q) \in S\text{ for some }p\}$.
 Then $\pi_x(S)\subseteq\pi_x(S\cdot a)$ for $x \in \{\col,\row\}$.
\end{proposition}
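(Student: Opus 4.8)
The plan is to unpack the subset-automaton transition straight from the definitions and read off each projection. Recall that $S\cdot a=\delta'(S,a)=\bigcup_{(p,q)\in S}\delta((p,q),a)$, and that in $\cN$ each term of this union equals $\{(\delta_K(p,a),q),\,(p,\delta_L(q,a))\}$. The single structural fact I will exploit is that, for every $(p,q)\in S$, one of its two successors, namely $(p,\delta_L(q,a))$, retains the first coordinate $p$, while the other, $(\delta_K(p,a),q)$, retains the second coordinate $q$. Thus a letter moves at most one coordinate of each pair, and every element of $S$ contributes to $S\cdot a$ a successor agreeing with it in the first coordinate and another agreeing with it in the second.

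From this the two inclusions are immediate. For $x=\col$, take any $p\in\pi_{\col}(S)$; by definition $(p,q)\in S$ for some $q$, and then $(p,\delta_L(q,a))\in S\cdot a$ witnesses $p\in\pi_{\col}(S\cdot a)$. The case $x=\row$ is the mirror image obtained by exchanging the two coordinates: given an index in $\pi_{\row}(S)$, arising from some $(p,q)\in S$, the successor $(\delta_K(p,a),q)\in S\cdot a$ preserves the second coordinate, so that index lies in $\pi_{\row}(S\cdot a)$. Hence $\pi_x(S)\subseteq\pi_x(S\cdot a)$ for both $x\in\{\col,\row\}$.

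There is essentially no obstacle here: the claim follows directly from the shape of $\delta$, under which a transition can never remove a first- or second-coordinate value already present in $S$ --- it can only introduce new ones. The only point needing care is to match each projection with the successor that fixes the relevant coordinate, and to observe that this holds element by element, so the inclusion is preserved by the union defining $\delta'$.
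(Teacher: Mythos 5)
Your proof is correct and follows essentially the same route as the paper's: pick a witness pair $(p,q)\in S$ for the given projection value, and observe that one of the two successors under $\delta$ --- namely $(p,\delta_L(q,a))$ for the first coordinate, or $(\delta_K(p,a),q)$ for the second --- preserves that coordinate inside $S\cdot a$. The paper states the column case and invokes symmetry for the row case, exactly as you do, so there is nothing to add.
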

\begin{proof}
 Let $p \in \pi_{\col}(S)$; then we have $(p,q) \in S$ for some $q$.
 Since $\delta((p,q),a)=\{(\delta_K(p,a),q), (p,\delta_L(q,a)\}$,
 we have
 $(p,\delta_L(q,a))\in \delta(S,a)$, so $p \in \pi_{\col}(\delta(S,a))$.
 By symmetry, the same claim holds for $\pi_{\row}$.
\qed
\end{proof}

We claim that 
in the subset automaton $\cD$,
every reachable subset $S$ of \mbox{$Q_K\times Q_L$} 
must contain a state in column 1 and a state in row 1,
that is, it
must satisfy
the following condition.
\medskip

\noin
{\bf Condition  (C):}
There exist states $(s, 1)$ and $(1,t)$ in $S$ for some $s\in Q_K$
and $t\in Q_L$.

\begin{lemma}\label{reach}
  Every reachable subset $S$ of subset automaton $\cD$
  satisfies Condition~(C).
\end{lemma}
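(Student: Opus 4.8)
The plan is to prove this by induction on the length of a word $w$ for which $S=\{(q_K,q_L)\}\cdot w$, equivalently by checking that Condition~(C) holds for the initial subset and is preserved by every letter. The guiding observation is that Condition~(C) asserts exactly that the value $1$ occurs among the first coordinates of the states of $S$ (yielding a state $(1,t)$ in row~$1$) and also among their second coordinates (yielding a state $(s,1)$ in column~$1$). These are precisely the two coordinate-projections that Proposition~\ref{pq} shows can only grow under the action of each letter, so the whole lemma is essentially a repackaging of that monotonicity.

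For the base case, the initial subset is $\{(q_K,q_L)\}=\{(1,1)\}$, and the single state $(1,1)$ lies in both row~$1$ and column~$1$. Taking $s=1$ and $t=1$, it simultaneously plays the role of $(s,1)$ and of $(1,t)$, so Condition~(C) holds.

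For the inductive step, I would assume a reachable subset $S$ satisfies Condition~(C) and fix $a\in\Sigma$. Picking $(s,1)\in S$ and using $\delta((s,1),a)=\{(\delta_K(s,a),1),(s,\delta_L(1,a))\}$, the successor $(\delta_K(s,a),1)$ lies in column~$1$, so $S\cdot a$ still contains a state in column~$1$. Symmetrically, picking $(1,t)\in S$, the successor $(1,\delta_L(t,a))\in\delta((1,t),a)$ lies in row~$1$, so $S\cdot a$ still contains a state in row~$1$. Hence $S\cdot a$ satisfies Condition~(C). Equivalently, one may invoke Proposition~\ref{pq} directly: the coordinate-projections containing $1$ for the initial subset are monotone, hence contain $1$ for every reachable $S$.

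I do not anticipate a genuine obstacle, since the claim follows immediately from the monotonicity in Proposition~\ref{pq}. The one point requiring care is that the two halves of Condition~(C) are preserved \emph{independently} and by \emph{different} nondeterministic branches of $\delta((p,q),a)$: the $\cK$-move $(\delta_K(p,a),q)$ preserves the column coordinate (hence membership in column~$1$), while the $\cL$-move $(p,\delta_L(q,a))$ preserves the row coordinate (hence membership in row~$1$). Keeping each branch matched to the correct half of the condition is the only subtlety.
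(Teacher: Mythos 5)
Your proof is correct and follows essentially the same route as the paper: verify Condition~(C) for the initial subset $\{(1,1)\}$ and propagate it by induction using the monotonicity of the coordinate projections from Proposition~\ref{pq} (which you also re-derive explicitly by tracking the two nondeterministic branches). No gap; this matches the paper's argument.
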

\begin{proof}
   The initial subset of $\cD$ is $\{(1,1)\}$,
   and it satisfies Condition~(C).
By Proposition~\ref{pq}, for every $a \in \Sigma$ we get that $1 \in \pi_{\col}(\delta(S,a))$ and $1 \in \pi_{\row}(\delta(S,a))$, so $\delta(S,a)$ satisfies Condition~(C).
 By induction, all reachable subsets
 satisfy Condition~(C).
\qed
\end{proof}

\begin{theorem}[Shuffle: Upper Bound]
\label{prop:bound}
 Let $\kappa(K)=m$ and $\kappa(L)=n$.
 Then 
 the state complexity of the shuffle of  $K $ and $L $ is at most
\begin{equation}
\label{eq:bound}
f(m,n)=2^{mn-1} + 2^{(m-1)(n-1)}(2^{m-1}-1)(2^{n-1}-1).
%f(m,n)=2^{mn-m-n}(2^{m+n}-2^m-2^n+2)
\end{equation}
\end{theorem}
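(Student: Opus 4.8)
The plan is to turn Lemma~\ref{reach} into a counting bound: since every reachable subset of the subset automaton $\cD$ satisfies Condition~(C), the state complexity of $K \shu L$ is at most the number of subsets $S \subseteq Q_K \times Q_L$ that contain at least one cell of column~1 and at least one cell of row~1. The minimal DFA equivalent to $\cD$ has no more states than $\cD$ has reachable subsets, so it suffices to count the subsets satisfying Condition~(C) and to show that this number is exactly $f(m,n)$.

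I would carry out the count by inclusion-exclusion on the complementary event. Writing $C_1 = \{(s,1) \mid s \in Q_K\}$ and $R_1 = \{(1,t) \mid t \in Q_L\}$ for the column-1 and row-1 cells, we have $|C_1| = m$, $|R_1| = n$, and $C_1 \cap R_1 = \{(1,1)\}$, so $|C_1 \cup R_1| = m + n - 1$. A subset fails Condition~(C) precisely when it avoids all of $C_1$ or avoids all of $R_1$. The number of subsets avoiding $C_1$ is $2^{mn-m}$, the number avoiding $R_1$ is $2^{mn-n}$, and the number avoiding both is $2^{mn-(m+n-1)} = 2^{(m-1)(n-1)}$. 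Hence the number of subsets satisfying Condition~(C) is
\[ 2^{mn} - 2^{mn-m} - 2^{mn-n} + 2^{(m-1)(n-1)}. \]

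The remaining step is the purely algebraic verification that this expression equals $f(m,n)$, and this is the only real calculation in the proof. Expanding $(2^{m-1}-1)(2^{n-1}-1) = 2^{m+n-2} - 2^{m-1} - 2^{n-1} + 1$ and multiplying by $2^{(m-1)(n-1)}$, the exponents simplify using $(m-1)(n-1) + (m+n-2) = mn-1$, $(m-1)(n-1) + (m-1) = mn-n$, and $(m-1)(n-1) + (n-1) = mn-m$, giving $2^{mn-1} - 2^{mn-n} - 2^{mn-m} + 2^{(m-1)(n-1)}$. Adding the leading term $2^{mn-1}$ reproduces the inclusion-exclusion count above. Since the combinatorial content is entirely supplied by Lemma~\ref{reach}, I do not anticipate any genuine obstacle here; the one point requiring care is simply tracking the exponents in this identity so that the bound matches the stated closed form $f(m,n)$ exactly.
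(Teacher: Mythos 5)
Your proposal is correct and follows the paper's approach: both rest entirely on Lemma~\ref{reach} and then count the subsets satisfying Condition~(C). The only difference is cosmetic --- the paper partitions these subsets according to whether they contain $(1,1)$, obtaining the two summands of $f(m,n)$ directly, while you count the complement by inclusion--exclusion and then verify algebraically that $2^{mn}-2^{mn-m}-2^{mn-n}+2^{(m-1)(n-1)}$ equals $f(m,n)$; both computations are correct.
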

\begin{proof}
 By Lemma~\ref{reach},
 every reachable subset of $\cD$ must contain
 a state in row~1 and a state in column 1.
 There are $2^{mn-1}$ subsets containing state 
 $(1,1)$,
 and $2^{(m-1)(n-1)}(2^{m-1}-1)(2^{n-1}-1)$ subsets
 not containing $(1,1)$ but containing $(s,1)$ 
 for some $s\in \{2,3,\ldots,m\}$ and $(1, t)$ 
 for some $t\in \{2,3,\ldots,n\}$. 
 This gives $f(m,n)$.
\qed
\end{proof}

Let $K$ and $L$ be two regular languages over $\Sig$.
If $\kappa(K)=\kappa(L)=1$,
then  each of $K$, $L$, and $K \shu L $ is  either $\emp$ or $\Sig^*$,
and $\kappa(K \shu L) =1$;
hence the bound $f(1,1)=1$ is tight.

Now suppose that $\kappa(K)=1$; 
here we have two possible choices for $K$, 
the empty language or $\Sig^*$. 
The first choice leads to $\kappa(K\shu L)=1$. 
Hence  only
the second choice is of interest, where the language
$K \shu L =\Sig^*\shu L $ is the all-sided ideal~\cite{BJL13} 
generated by $L$.
If $\kappa(L)=2$, the upper bound $f(1,2)=2$
is met by the unary language $L=aa^*$.
Hence assume that $\kappa(K)=1$ and $\kappa(L)\ge 3$.
The next  observation 
shows that in such a case,
the tight bound is less than $f(1,n)=2^{n-1}$.

\begin{proposition}[Okhotin~\cite{Okh10}]\label{prop:m=1}
 If $\kappa(L)\ge 3$,
 then the state complexity of $\Sig^*\shu L$ is at most $2^{n-2}+1$, 
 and this bound can be reached only if $|\Sig| \ge n-2$.
\end{proposition}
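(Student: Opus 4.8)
The plan is to specialise the construction of the NFA $\cN$ to the case $K=\Sig^*$. Since $\kappa(K)=1$ and the empty language is excluded, $K=\Sig^*$, so $\cK$ has a single accepting state with a self-loop on every letter. Identifying each state $(1,q)$ of $\cN$ with $q\in Q_L$, the transitions become $\delta(q,a)=\{q,\delta_L(q,a)\}$; that is, $\cN$ is the minimal DFA $\cL$ of $L$ with a self-loop added on every state and every letter. The crucial consequence I would exploit throughout is \emph{monotonicity}: reading any letter can only enlarge a subset, $\delta(S,a)=S\cup\delta_L(S,a)\supseteq S$. This is the $m=1$ specialisation of Proposition~\ref{pq}, and here Condition~(C) of Lemma~\ref{reach} simply asserts that $1\in S$ for every reachable subset $S$.

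For the upper bound I would first show that all accepting subsets collapse. If $S\cap F_L\neq\emptyset$, then $\delta(S,w)\supseteq S$ meets $F_L$ for every word $w$, so the right language of $S$ is $\Sig^*$; hence every reachable accepting subset is equivalent to every other and contributes a single state to the minimal DFA. (If $1\in F_L$, i.e.\ $\eps\in L$, then the initial subset $\{1\}$ is already accepting, $\Sig^*\shu L=\Sig^*$, and the bound holds trivially; so I may assume $1\notin F_L$.) Every non-accepting reachable subset contains $1$ and is disjoint from $F_L$, hence is a subset of $Q_L\setminus F_L$ containing $1$. Writing $f=|F_L|\ge 1$ (nonempty because $L\neq\emp$ when $n\ge 3$), there are at most $2^{\,n-f-1}\le 2^{\,n-2}$ such subsets. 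Adding the single accepting class yields $2^{n-2}+1$.

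For the tightness condition I would run this counting backwards. Meeting $2^{n-2}+1$ forces exactly one accepting class and $2^{n-2}$ non-accepting classes; since the non-accepting classes are bounded by the $2^{\,n-f-1}$ non-accepting subsets, we get $2^{n-2}\le 2^{\,n-f-1}$, forcing $f=1$, and it forces every one of the $2^{n-2}$ non-accepting subsets to be reachable. Relabelling $F_L=\{n\}$, the non-accepting subsets are exactly the subsets of $\{1,\dots,n-1\}$ containing $1$, so each two-element ``atom'' $A_j=\{1,j\}$ with $j\in\{2,\dots,n-1\}$ must be reachable. The key lemma is: if $A_j$ is reachable, then some letter $b$ satisfies $\delta_L(1,b)=j$. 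Indeed, along a word reaching $A_j$ the monotone chain of reachable subsets stays inside $\{1,j\}$, so it jumps from $\{1\}$ to $\{1,j\}$ in one step, reading a letter $b$ with $\delta_L(1,b)=j$. Because $\delta_L(1,\cdot)$ is a function, the $n-2$ distinct targets $j$ require $n-2$ distinct letters, giving $|\Sig|\ge n-2$.

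The upper bound is routine once the absorbing-accepting collapse is observed, so I expect the main obstacle to be the ``only if'' direction. The delicate points are arguing rigorously that meeting the bound forces $|F_L|=1$ \emph{and} reachability of every atom, and then the atom-reachability lemma itself: it is precisely here that the monotone growth of subsets must be used to conclude that a fresh input letter is needed to first introduce each of the states $2,\dots,n-1$. Some care is also required to dispose of the degenerate cases ($\eps\in L$, or an accepting initial state), verifying that they cannot attain the bound for $n\ge 3$.
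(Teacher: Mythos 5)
Your argument is correct, but note that the paper itself offers no proof of this proposition: it is imported verbatim from Okhotin~\cite{Okh10}, so there is no in-paper argument to compare against. Your derivation is a sound self-contained reconstruction. The specialisation $\delta(S,a)=S\cup\delta_L(S,a)\supseteq S$ is the right engine: it collapses all accepting subsets into a single class with right language $\Sig^*$, and together with Condition~(C) (here: $1\in S$) and $|F_L|\ge1$ it yields at most $2^{\,n-|F_L|-1}+1\le 2^{n-2}+1$ classes. The ``only if'' direction is also handled correctly: attaining the bound forces $|F_L|=1$ and reachability of every subset of $Q_L\setminus F_L$ containing $1$, and the monotone chain from $\{1\}$ to the atom $\{1,j\}$ is sandwiched between these two sets, so the first nontrivial step must read a letter $b$ with $\delta_L(1,b)=j$; functionality of $\delta_L(1,\cdot)$ then gives $n-2$ distinct letters. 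The degenerate cases ($F_L=\emp$ impossible for $n\ge3$; $1\in F_L$ gives complexity $1<2^{n-2}+1$) are correctly disposed of. The one point worth stating more explicitly is that no accepting class can coincide with a non-accepting one (they differ on $\eps$), so the count $1+\#\{\text{reachable non-accepting subsets}\}$ really is an upper bound on the number of equivalence classes; with that remark the proof is complete.
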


Okhotin showed that
the language
$L=(a_1\Sig^*a_1 \cup \cdots \cup a_{n-2}\Sig^*a_{n-2}) \Sig^*$, 
where 
$\Sig=\{a_1,\ldots, a_{n-2}\}$, 
meets this bound~\cite{Okh10}. 
This takes care of the case $\kappa(K)=1$ 
and,  by symmetry, of the case $\kappa(L)=1$.
 
In what follows we assume that $m\ge2$ and $n\ge2$.
First, let us show
that the upper bound $f(m,n)$
cannot be met by regular languages defined over a fixed alphabet.

\begin{proposition}\label{prop:alphabet_mn-1}
 Let $K$ and $L$ be regular languages over $\Sigma$ with
 $\kappa(K)=m$ and $\kappa(L)=n$, where $m,n\ge2$.
 If $\kappa(K\shu L)=f(m,n)$,  then $|\Sig| \ge mn - 1$.
\end{proposition}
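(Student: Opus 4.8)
The plan is to exploit the fact that only $mn-1$ distinct two‑element subsets can be reached directly from the initial subset $\{(1,1)\}$, and to show that under the tightness hypothesis each of them must in fact be produced by such a direct transition, with each requiring a letter of its own. First I would record, straight from Lemma~\ref{reach}, that the \emph{only} reachable singleton is $\{(1,1)\}$, since a singleton $\{(p,q)\}$ satisfies Condition~(C) exactly when $(p,q)=(1,1)$. Then, for each pair $(\alpha,\gamma)\in(Q_K\times Q_L)\setminus\{(1,1)\}$ I set $P_{(\alpha,\gamma)}=\{(\alpha,1),(1,\gamma)\}$, which is precisely the image $\{(1,1)\}\cdot a$ produced by any letter $a$ with $\delta_K(1,a)=\alpha$ and $\delta_L(1,a)=\gamma$ (when $\alpha=1$ or $\gamma=1$ this collapses to an ``axis'' pair containing $(1,1)$). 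A short computation shows that the $mn-1$ sets $P_{(\alpha,\gamma)}$ are pairwise distinct and each satisfies Condition~(C); hence, if $\kappa(K\shu L)=f(m,n)$, then by Theorem~\ref{prop:bound} all of them are reachable in $\cD$.

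The core claim is that each $P_{(\alpha,\gamma)}$ can be reached only by applying to $\{(1,1)\}$ a letter $a$ with $(\delta_K(1,a),\delta_L(1,a))=(\alpha,\gamma)$. By Proposition~\ref{pq}, any predecessor $S'$ of $P_{(\alpha,\gamma)}$ satisfies $\pi_{\col}(S')\subseteq\{1,\alpha\}$ and $\pi_{\row}(S')\subseteq\{1,\gamma\}$, so $S'$ lies in the grid $\{1,\alpha\}\times\{1,\gamma\}$. When $\alpha=1$ or $\gamma=1$ this grid is a single line, and Condition~(C) forces $S'\in\{\{(1,1)\},P_{(\alpha,\gamma)}\}$, so the only non‑trivial predecessor is $\{(1,1)\}$. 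When $\alpha,\gamma\neq1$ I would enumerate the (C)-subsets of the four‑element grid $\{(1,1),(1,\gamma),(\alpha,1),(\alpha,\gamma)\}$ and check, coordinate by coordinate, which of them admit a letter mapping them onto $P_{(\alpha,\gamma)}$; the only reachable candidates turn out to be $\{(1,1)\}$, $P_{(\alpha,\gamma)}$ itself, and $T_{(\alpha,\gamma)}=\{(1,1),(\alpha,\gamma)\}$.

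The hard part will be dealing with this extra predecessor $T_{(\alpha,\gamma)}$, which could a priori let one reach $P_{(\alpha,\gamma)}$ without the direct step from $\{(1,1)\}$. I would dispose of it by a shortest‑word argument. The same grid analysis shows that the only reachable predecessors of $T_{(\alpha,\gamma)}$ are $P_{(\alpha,\gamma)}$ and $T_{(\alpha,\gamma)}$ itself, and that $\{(1,1)\}\cdot a$ is never equal to $T_{(\alpha,\gamma)}$ (both elements of $\{(1,1)\}\cdot a$ have a coordinate equal to $1$, whereas $(\alpha,\gamma)$ has neither). Hence every word reaching $T_{(\alpha,\gamma)}$ must already pass through $P_{(\alpha,\gamma)}$. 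Now take $w$ to be a shortest word with $\{(1,1)\}\cdot w=P_{(\alpha,\gamma)}$ and write $w=w'a$; the predecessor $\{(1,1)\}\cdot w'$ cannot be $T_{(\alpha,\gamma)}$, since that would yield a strictly shorter word reaching $P_{(\alpha,\gamma)}$, contradicting minimality. Therefore $\{(1,1)\}\cdot w'=\{(1,1)\}$ and $a$ satisfies $(\delta_K(1,a),\delta_L(1,a))=(\alpha,\gamma)$.

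Finally I would assemble the counting step. The map $a\mapsto(\delta_K(1,a),\delta_L(1,a))$ sends each letter of $\Sig$ to a single element of $Q_K\times Q_L$. Since all $mn-1$ values $(\alpha,\gamma)\neq(1,1)$ must be realized in order to reach the corresponding pairs $P_{(\alpha,\gamma)}$, at least $mn-1$ distinct letters are needed, whence $|\Sig|\ge mn-1$. I expect the grid case analysis in the $\alpha,\gamma\neq1$ case (identifying $\{(1,1)\}$, $P_{(\alpha,\gamma)}$, and $T_{(\alpha,\gamma)}$ as the only reachable predecessors) to be the one genuinely delicate computation, but it is finite and mechanical once the coordinate‑wise forcing conditions are written out.
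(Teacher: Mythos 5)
Your proposal is correct, and it rests on the same skeleton as the paper's proof: you single out exactly the same $mn-1$ two-element witness sets (your $P_{(\alpha,\gamma)}$ are the paper's $A_s$, $B_t$, and $C_{st}$), you use Proposition~\ref{pq} together with Condition~(C) to confine any predecessor of such a set to the grid $\{1,\alpha\}\times\{1,\gamma\}$, and you finish with the same count of the values of the map $a\mapsto(\delta_K(1,a),\delta_L(1,a))$. Where you genuinely diverge is the case $\alpha\neq1$, $\gamma\neq1$: you prove the stronger claim that $P_{(\alpha,\gamma)}$ must, at its first visit, be entered directly from $\{(1,1)\}$, which forces you to enumerate the valid predecessors of both $P_{(\alpha,\gamma)}$ and $T_{(\alpha,\gamma)}=\{(1,1),(\alpha,\gamma)\}$ and to run a shortest-word argument; your analysis here checks out (the valid predecessors of $P_{(\alpha,\gamma)}$ are exactly $\{(1,1)\}$, $T_{(\alpha,\gamma)}$, and $P_{(\alpha,\gamma)}$; those of $T_{(\alpha,\gamma)}$ are exactly $P_{(\alpha,\gamma)}$ and $T_{(\alpha,\gamma)}$; and $\{(1,1)\}\cdot a$ can never equal $T_{(\alpha,\gamma)}$), modulo the trivial point that a predecessor equal to $P_{(\alpha,\gamma)}$ itself must also be excluded by the same minimality of $w$. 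The paper avoids this entire detour with a weaker but sufficient claim: any predecessor $S\neq C_{st}$ of $C_{st}$ must \emph{contain} $(1,1)$ (shown by a short contradiction), and then the letter $a$ with $\delta(S,a)=C_{st}$ already satisfies $\delta_K(1,a)=s$ and $\delta_L(1,a)=t$, since $\delta((1,1),a)\subseteq C_{st}$ pins down both coordinates -- whether $S$ equals $\{(1,1)\}$ is irrelevant to the counting. So the paper's version buys brevity (no analysis of $T_{(\alpha,\gamma)}$, no first-visit/minimality argument), while yours buys a slightly stronger structural fact, namely that each witness pair is created by a direct transition out of the initial subset; both yield the same conclusion $|\Sig|\ge mn-1$.
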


\begin{proof}
 For $s=2,3,\ldots,m$ and $t=2,3,\ldots,n$ denote
\begin{align*}
 A_s&=\{(1,1),(s,1)\},\\
 B_t &=\{(1,1),(1,t)\}, \\ 
 C_{st} &=\{(s,1),(1,t)\}. 
 \end{align*}
If all the subsets satisfying Condition~(C)
are reachable,
then, in particular, all the subsets $A_s, B_t$, and $C_{st}$
 must be reachable.
 Let us show that all these subsets 
 must be reached from some subsets containing state $(1,1)$
 by distinct symbols.

 Suppose that a set $A_s$ is reached from a reachable set $S$ with $S\neq A_s$
 by a symbol~$a$, that is, we have $A_s=\delta(S,a)$ and $S\neq A_s$.
 The set $A_s$  contains only states in column 1 and rows 1 or $s$.
 By Proposition~\ref{pq},
  the set $S$ may only contain states in column 1 and in rows 1 or $s$,
 that is, we have $S\subseteq\{(1,1), (s,1)\}$.
 Since  $S\neq A_s$, we must  have $S=\{(1,1)\}$.

 By symmetry, each $B_t$ can only be reached from $\{(1,1)\}$.

 Suppose that a set $C_{st}$ is reached 
 from a reachable set $S$ with $S\neq C_{st}$ 
 by a symbol~$a$.
 By Proposition~\ref{pq},
 we must have $S\subseteq\{(1,1),(s,1),(1,t),(s,t)\}$.
 Let us show that $(1,1)\in S$.
 Suppose for a contradiction that $(1,1)\notin S$.
 Then, since $S$ is reachable,
 it must contain a state in column 1 and a state in row 1,
 that is,
 we must have $\{(s,1),(1,t)\}\subseteq S$.
 But then $(s,t)\in S$ since $S\neq C_{st}$.
 However, then  
   $\delta_K(s,a)=1$ and $\delta_L(t,a)=1$
 which  implies that $(1,1)\in\delta((s,1),a)$, 
 and so  $(1,1)\in C_{st}$.
 This is a contradiction.
 Therefore $C_{st}$ is reached from a set containing $(1,1)$.

 Thus each $A_s$ is reached from $\{(1,1)\}$ by a symbol $a_s$,
 each $B_t$ is reached from $\{(1,1)\}$ by a symbol $b_t$,
 each $C_{st}$ is reached from a set containing $(1,1)$ by a symbol $c_{st}$,
 and we must have
 \begin{align*}
   \delta_K(1,a_s)=s  &\text{ and } \delta_L(1,a_s)=1,\\ 
   \delta_K(1,b_t)=1  &\text{ and } \delta_L(1,b_t)=t,\\
   \delta_K(1,c_{st})=s   &\text{ and }  \delta_L(1,c_{st})=t. 
 \end{align*}
 It follows that all the symbols $a_s, b_t$, and $c_{st}$
 must be pairwise distinct.
 Therefore we  have $|\Sigma|\ge m-1 + n-1 + (m-1)(n-1)=mn-1$. 
\qed
\end{proof}

Unfortunately, this lower bound on the size of the alphabet is not tight, as  is demonstrated by the following example:

\begin{example}
 If $t$ is a transformation of the set 
 $ \{1,2,\ldots, n\}$ and $q\in \{1,2,\ldots,n\}$, 
 let $qt$ be the image of $q$ under~$t$.
 Transformation $t$ can now be denoted by $[1t, 2t,\dots, nt]$.
 
\medskip (1)
 If $m=n=2$, we have $f(2,2)=10$. 
 Let $\Sig=\{a,b,c,d\}$, 
 and let the DFAs  $\cK$ and $\cL$  
 be as shown in Fig.~\ref{fig:Wit22}, 
 and let $K$ and $L$ be their languages.
 Then $\kappa(K \shu L)=10$.
 We have used GAP~\cite{GAP4} to show
 that the bound cannot be reached with a smaller alphabet,
 and that the DFAs of Fig.~\ref{fig:Wit22}
 are unique up to isomorphism. 

\begin{figure}[t]
\unitlength 8pt
\begin{center}\begin{picture}(25,5)(0,2)
\gasset{Nh=2.4,Nw=2.4,Nmr=1.2,ELdist=0.4,loopdiam=1.4}
\node(1')(2,4){$1'$}\imark(1')
\node(2')(8,4){$2'$}\rmark(2')
\drawedge[curvedepth=1.4,ELdist=0.5](1',2'){$a,b,c$}
\drawedge[curvedepth=1.4,ELdist=.5](2',1'){$b,c,d$}
\drawloop(1'){$d$}
\drawloop(2'){$a$}
\node(1)(16,4){$1$}\imark(1)
\node(2)(22,4){$2$}\rmark(2)
\drawedge[curvedepth=1.4,ELdist=0.5](1,2){$a,c,d$}
\drawedge[curvedepth=1.4,ELdist=0.5](2,1){$a,b,c,d$}
\drawloop(1){$b$}
\end{picture}\end{center}
\caption{ Witness DFAs  $\cK$ and $ \cL$  for shuffle with $|Q_K|=2$, $|Q_L|=2$.}
\label{fig:Wit22}
\end{figure}

\medskip (2)
 For $m=2$ and $n=3$,
 the minimal size of the alphabet of a witness pair is 6.
 We have verified this by a dedicated algorithm
 enumerating all pairs of non-isomorphic DFAs with $2$ and $3$ states.
 In contrast to the previous case,
 over a minimal alphabet there are more than $60$
 non-isomorphic DFAs of  $L$ --
 even if we do not distinguish them by sets of final states --
 that meet the bound with some $K$.
 One of the witness pairs is described below.

 Let $\Sig = \{a,b,c,d,e,f\}$.
 Let $\cK=(\{1,2\},\Sig,\delta_{K},1,\{2\})$,
 and let $a=[1,2]$, $b=c=[2,1]$, $d=[1,1]$, $e=[2,2]$, and $f=[2,1]$.
 Let $ \cL=(\{1,2,3\},\Sig,\delta_{L},1,\{1\})$,
 and let $a=[2,2,3]$, $b=[2,1,3]$, $c=[1,1,1]$, $d=e=[3,1,2]$, $f=[3,1,1]$.
 Then $\kappa(K \shu L)=44=f(2,3)$.
\end{example}

The bound $mn-1$ on the size of the alphabet is not tight for $m=n=2$, where an alphabet of size four is required.
For any $m,n \ge 2$ the subsets of $\{1,2\} \times \{1,2\}$
satisfying~(C)
must be also reachable, and to reach them we can use only transformations mapping $1$ to either $1$ or $2$. There are only three such transformations counted in Proposition~\ref{prop:alphabet_mn-1}; thus we need one more letter.

\section{Partial Results about Tightness}

To prove that the upper bound $f(m,n)$ 
of Equation~(\ref{eq:bound}) is tight, 
we must exhibit two languages $K$ and $L$ 
with  state complexities $m$ and $n$, respectively, 
such that $\kappa(K \shu L) = f(m,n)$.
As usual, we use DFAs to represent the languages: 
Let $\cK$ and $\cL$ be minimal \emph{complete} DFAs for $K$ and $L$.
We first construct the NFA $\cN$ 
as defined in Section~\ref{sec:bound}, 
and  we consider the subset automaton $\cD$ of NFA $\cN$.
We must then show 
that $\cD$ has $f(m,n)$ states reachable from the initial state $\{(1,1)\}$, 
and that these states are pairwise distinguishable.
We were unable to prove this for all $m$ and $n$, 
but we have some partial results about reachability in Subsection~\ref{sec:reach}, 
and we deal with distinguishability in Subsection~\ref{sec:distinguish}.

\subsection{Reachability}\label{sec:reach}

We performed computations 
verifying reachability of the upper bound 
for small values of $m$ and $n$. 
These results are summarized in Table~\ref{tab:reachability}.

The computation in the hardest case with $m=n=6$ took about 48 days on a computer with AMD Opteron(tm) Processor 6380 (2500 MHz) and 64 GB of RAM.
Moreover, we verified that in all these cases,
every subset of size at least 3 is directly reachable from some smaller subset.
We also verified that for reachability in case of $m=n=3$ an alphabet of size 12 is sufficient, and in case of $m=n=4$ an alphabet of size 50 is sufficient.
Using these results, we are going to prove reachability for all $m,n$
with $2\le m \le 5$ and $n\ge2$.

\renewcommand{\arraystretch}{1.1}
\begin{table}[t]
\caption{Computational verification of reachability of the bound. The fields with $\checkmark^*$ follow from the proofs of Subsection~\ref{sec:reach}.}\label{tab:reachability}
\begin{center}
$\begin{array}{|l|c|c|c|c|c|c|c|}\hline
m \backslash n&\ 2\      &\ 3\      &\ 4\      &\ 5\      &\ 6\      &\ 7\      &\ \ge 8\    \\ \hline
2             &\checkmark&\checkmark&\checkmark&\checkmark&\checkmark&\checkmark&\checkmark^*\\ \hline
3             &          &\checkmark&\checkmark&\checkmark&\checkmark&\checkmark&\checkmark^*\\ \hline
4             &          &          &\checkmark&\checkmark&\checkmark&\checkmark&\checkmark^*\\ \hline
5             &          &          &          &\checkmark&\checkmark&\checkmark&\checkmark^*\\ \hline
6             &          &          &          &          &\checkmark& ?        & ?          \\ \hline
7             &          &          &          &          &          & ?        & ?          \\ \hline
\ge 8         &          &          &          &          &          &          & ?          \\ \hline
\end{array}$
\end{center}
\vskip-20pt
\end{table}

Without loss of generality, 
the set of states of any $n$-state DFA is denoted by $Q_n=\{1,2,\dots,n\}$.
Let $\cT_n$
be the monoid  of all transformations of the set $Q_n$.
Let $p,q\in Q_n$  and $P\subseteq Q_n$.
Let $\mathbf{1}$ denote the identity transformation.
Let $(p\to q)$ denote the transformation 
that maps state $p$ to state $q$ 
and acts as the identity on all the other states.
Let $(p,q)$ denote the transformation that transposes $p$ and $q$.

Here we deal only with reachability, so final states do not matter.
We assume that the sets of final states are empty in this subsection.

Let $\Sigma_{m,n}=\{ a_{s,t} \mid s\in \cT_m \text{ and } t\in \cT_n\}$
be an alphabet consisting of $m^m n^n$ symbols.
If an input $a$ induces transformations 
$s$ in  $\cT_m$ and $t$ in $\cT_n$, 
this will be indicated by $a\colon s;t$.

Define DFAs $\cK_{m,n}=(Q_m,\Sig_{m,n},\delta_m,1, \emptyset)$ and 
$\cL_{m,n}  = (Q_n, \Sig_{m,n}, \delta_n, 1, \emptyset)$,
where
$\delta_m(p,a_{s,t}) = p s$ if $p\in Q_m$ and
$\delta_n(q,a_{s,t}) =q t$ if $q \in Q_n$.
Let $\cN_{m,n}$ be the NFA for the shuffle of languages
recognized by DFAs $\cK_{m,n}$ and $\cL_{m,n}$
as described in Section~\ref{sec:bound},
and let $\cD_{m,n}$ be the subset automaton of $\cN_{m,n}$.
The NFA $\cN_{m,n}$
has alphabet $\Sig_{m,n}$, 
and so has an input letter 
for every pair of transformations in $\cT_m\times \cT_n$.
Therefore the addition of another input letter 
to the DFAs $\cK_{m,n}$ and $\cL_{m,n}$ 
cannot add any new set of states of $\cN_{m,n}$ 
that would be reachable from $\{(1,1)\}$ in $\cD_{m,n}$.

Let $m'\le m$ and $n'\le n$.
Then DFA $\cK_{m',n'}=(Q_{m'},\Sig_{m',n'}, \delta_{m'}, 1, \emptyset)$ (respectively, the DFA
$\cL_{m',n'}=(Q_{n'},\Sig_{m',n'}, \delta_{n'}, 1, \emptyset)$)
is a sub-DFA of $\cK_{m,n}$ (respectively, of $\cL_{m,n}$),
in the sense that $Q_{m'} \subseteq Q_m$, $\Sig_{m',n'} \subseteq \Sig_{m,n}$, and $\delta_{m'} \subseteq \delta_m$.
As well, NFA $\cN_{m',n'}$ is a sub-NFA of $\cN_{m,n}$.
Note that $\cD_{m,n}$ is extremal for the shuffle: every language $K \shu L$, where $K$ and $L$ are languages with state complexities $m$ and $n$ respectively, is recognized by some sub-DFA of $\cD(m,n)$ after possibly renaming some letters.

For the next lemma it is convenient to consider 
a subset $S$ of states $(p,q)$ of $\cN_{m,n}$ 
as an $m\times n$ matrix, where the entry in row $p$ and column $q$
is $(p,q)$ if $(p,q)\in S$, 
and it is empty otherwise.
We first introduce the following notions.

\begin{definition}
 Let $i,i'\in Q_m$, $i\neq i'$, and $j,j'\in Q_n$, $j\neq j'$.\\
(a) 
A row $i'$ \emph{contains} row $i$, 
if $(i,j)\in S$ implies $(i',j) \in S$ for all $j \in Q_n$.\\
(b) 
A column $j'$ \emph{contains} column $j$
 if $(i,j) \in S$ implies $(i,j') \in S$ for all $i \in Q_m$. \\
(c)
A subset of $Q_m \times Q_n$ is \emph{valid}
if it satisfies Condition~(C) from Lemma~\ref{reach}, that is, if it contains a state in row 1 and a state in column~1.
\end{definition}

\begin{lemma}\label{lem:induction}
 Let $S$ be a valid subset of $Q_m \times Q_n$ 
 with the property that there are distinct $i,i'$ or $j, j'$ 
 such that either 
 row $i'$ contains row $i$ or column $j'$ contains column $j$.
 Assume that every
 valid subset $S'$ of $Q_{m'} \times Q_{n'}$, 
 where $m' < m$, or $n' < n$, or $|S'| < |S|$, 
 is reachable in DFA $\cD_{m',n'}$.
 Then $S$ is reachable in  $\cD_{m,n}$.
\end{lemma}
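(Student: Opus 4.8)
The plan is to run the strong induction granted in the statement, reducing the reachability of $S$ to that of a strictly smaller valid set (with the same $m,n$) or of a lower-dimensional valid set. Since the two factors of the shuffle play symmetric roles in $\cN_{m,n}$ and $\Sig_{m,n}$ contains a letter $a_{s,t}$ for \emph{every} pair $(s,t)\in\cT_m\times\cT_n$, I would first observe that it suffices to treat the row case: assume without loss of generality that there are distinct rows $i,i'$ with row $i'$ containing row $i$, the column case being identical after interchanging $\row$ with $\col$ and $\cT_m$-transformations with $\cT_n$-transformations. Write $C_i=\{q:(i,q)\in S\}$ and $C_{i'}=\{q:(i',q)\in S\}$, so that $C_i\subseteq C_{i'}$.

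The main case is $C_i\neq\emptyset$. Here I would set $S'=S\setminus\{(i',q):q\in C_i\}$, deleting from row $i'$ exactly the columns duplicated from row $i$. By containment every deleted state lies in $S$, and there is at least one, so $|S'|<|S|$. The key computation is that the single letter $a_{s,\mathbf{1}}$ with $s=(i\to i')$ recovers $S$: because the column component is the identity, $\delta'(S',a_{s,\mathbf{1}})=S'\cup s(S')$, and $s(S')$ merely reinserts the states $(i',q)$ for $q\in C_i$ while fixing every other row, so $\delta'(S',a_{s,\mathbf{1}})=S$ with no spurious states. The induction hypothesis, applied with $|S'|<|S|$, then gives $S'$ reachable, hence $S$ reachable.

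For this to go through I must check that $S'$ is valid. The column-$1$ witness survives automatically: if $1\in C_i$ then $(i,1)\in S$ is untouched (only row $i'\neq i$ is modified), and if $1\notin C_i$ then column $1$ is not modified at all. The row-$1$ witness survives unless $i'=1$, and then it is lost only when $C_i=C_{i'}=C_1$, that is, when rows $1$ and $i$ are identical; in that single degenerate subcase I would swap the roles of the two rows, deleting row $i$ to get $\tilde S'=S\setminus\{(i,q):q\in C_i\}$ (which keeps row $1$ intact and is still valid) and recovering $S$ by the letter $a_{(1\to i),\mathbf{1}}$, using $C_1=C_i$. Finally, the case $C_i=\emptyset$ means $S$ has an empty row, so $S$ lives in $(Q_m\setminus\{i\})\times Q_n$ and is reachable by the induction hypothesis in dimension $m'=m-1<m$ through the sub-NFA $\cN_{m-1,n}$ of $\cN_{m,n}$.

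I expect the main obstacle to be the validity bookkeeping rather than the transition computation: the whole point is that deleting the duplicated states must not destroy the row-$1$ or column-$1$ witness demanded by Condition~(C), and the identical-row subcase ($i'=1$, $C_1=C_i$) is precisely where the naive deletion fails and the role-swap is forced. A secondary subtlety is the empty-row (empty-column) case, where one passes to the lower-dimensional automaton and must verify that reachability transfers along the sub-NFA inclusion; one cannot simply ``relabel rows by a single letter,'' since subset transitions take unions, so this case is best handled as genuine dimension reduction via the induction hypothesis.
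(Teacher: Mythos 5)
Your proposal is correct and takes essentially the same route as the paper's proof: when a row or column of $S$ is empty you reduce the dimension and transfer reachability through the sub-NFA inclusion, and otherwise you delete from the containing row $i'$ exactly the entries duplicated from row $i$, reach the strictly smaller valid set by the induction hypothesis, and restore $S$ with the single letter inducing $i \to i';\one$. If anything, you are more careful than the paper, whose proof silently assumes the reduced set $S'$ is still valid; your explicit check of Condition~(C) and the role-swap in the degenerate subcase ($i'=1$ with row $1$ equal to row $i$) close that small gap.
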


\begin{proof}
If $S$ contains an empty row or column, 
then without loss of generality 
we can renumber the $n$ states of $\cL_{m,n}$
in such a way that column $n$ is the empty column in $S$. 
By the inductive assumption we know 
that $S$ is reachable in $\cD_{m,n-1}$ by some word $w$. 
Since $\cN_{m,n-1}$ is a sub-NFA of $\cN_{m,n}$, 
$S$ is reachable in 
$\cD_{m,n}$ as well by the same word.
Suppose that $S$ has neither an empty row nor an empty column.
By symmetry, it is sufficient 
to consider the case with distinct $i$ and $i'$ 
such that row~$i'$ contains row $i$.
Let $S' = S \setminus \{(i',j) \mid (i,j) \in S\text{ for }j \in \{1,\ldots,n\}\}$.
Since $|S'| < |S|$, the set $S'$  is reachable by assumption.
To obtain $S$, 
we apply the letter that induces 
the transformation $i \to i'; \one$.
\qed
\end{proof}

\begin{lemma}\label{lem:single_element}
 Let $S$ be a valid subset of $Q_m \times Q_n$
 such that there is a column or a row with exactly one element.
 Assume that every
 valid subset $S'$ of $Q_{m'} \times Q_{n'}$,
 where $m' < m$, or $n' < n$, or $|S'| < |S|$, 
 is reachable in $\cD_{m', n'}$.
 Then $S$ is reachable in $\cD_{m,n}$.
\end{lemma}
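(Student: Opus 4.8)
The plan is to reduce the case of a set $S$ with a singleton row or column to the situation already handled by Lemma~\ref{lem:induction}, using the rich alphabet $\Sig_{m,n}$ to rearrange $S$ into a more convenient shape. By symmetry, assume there is a row with exactly one element; say the single element in that row is $(i,j)$. The key observation is that the inductive hypotheses are identical to those of Lemma~\ref{lem:induction}, so whenever $S$ already has a row contained in another row or a column contained in another column, we are immediately done by that lemma. Thus the interesting situation is when no containment holds, and in particular the singleton row is not contained in any other row and does not contain any other row.

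First I would set up the reduction. Let $(i,j)$ be the unique element of its row $i$. I want to realize $S$ as the image $\delta(S',a)$ of a slightly smaller valid set $S'$ under a single well-chosen letter $a \colon s;t$. The natural candidate is to take $s$ to be a transformation that folds row $i$ onto some other row, say $i \to i'$ for a suitable $i'$, while acting as the identity off row $i$; then the preimage construction gives a set $S'$ that agrees with $S$ outside row $i$ and additionally may carry an element $(i', j)$. Because row $i$ is a singleton, collapsing it changes $S$ by at most one state, so the resulting $S'$ satisfies $|S'| < |S|$ (or is reachable for another reason), and I invoke the inductive assumption on $S'$. The care needed is to choose $i'$ and verify that $S'$ is valid and that applying $a$ to $S'$ yields exactly $S$ and not some strictly larger set; validity is preserved as long as the folding does not destroy the element in column $1$ or the element in row $1$ guaranteed by Condition~(C).

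The main obstacle I expect is the boundary cases where the singleton row is itself row $1$, or where the singleton element sits in column $1$, since these are precisely the states forced to be present by Condition~(C). If $i = 1$ and row $1$ is a singleton $\{(1,j)\}$, I cannot freely fold row $1$ away without possibly violating validity, so here I would instead exploit the column structure: either some column is a singleton and I repeat the argument transposed, or I argue that a singleton row~$1$ together with the absence of any containment forces a small, directly checkable configuration that can be reached by hand or by appeal to the base-case computations recorded in Table~\ref{tab:reachability}. In these degenerate cases the cleanest route is to reduce to a sub-automaton $\cD_{m',n'}$ with $m' < m$ or $n' < n$, where the inductive hypothesis again applies, rather than to reduce the size $|S|$.

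To summarize the step order: (i) dispose of the case where $S$ has an empty row or column, or a row/column containment, by citing Lemma~\ref{lem:induction}; (ii) in the remaining case, locate the singleton row or column and, when it is neither row~$1$ nor column~$1$, build the smaller valid preimage $S'$ by collapsing the singleton and apply the appropriate transformation letter to recover $S$; (iii) handle the degenerate cases, where the singleton coincides with a Condition~(C) witness, by reducing to a smaller sub-DFA $\cD_{m',n'}$ or by direct verification. The crux of the whole argument is step~(ii)'s guarantee that the folding letter produces \emph{exactly} $S$, which rests on the fact that a singleton row or column can be merged into its neighbour without spilling extra states, so that the net effect is to decrease the cardinality by one while preserving validity.
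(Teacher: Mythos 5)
There is a genuine gap at the crux of your step~(ii). Your folding letter $a\colon (i\to i');\one$ acts on a state by $\delta((p,q),a)=\{(ps,q),(p,q)\}$, so because the second coordinate is the identity, $\delta(S',a)\supseteq S'$ for every $S'$. Now suppose $(i,j)$ is the unique element of row $i$. If column $j$ contains some other element $(i',j)$, then row $i$ is contained in row $i'$ and Lemma~\ref{lem:induction} already applies, so the only case your lemma must add is the one where $(i,j)$ is the \emph{only} element of both its row and its column. In that case your construction cannot work: the transformation $(i\to i')$ has empty preimage of $i$, so the only way to get $(i,j)$ into $\delta(S',a)$ is to have $(i,j)\in S'$ already; but then $(i',j)\in\delta((i,j),a)\subseteq S$, contradicting that column $j$ of $S$ is the singleton $\{(i,j)\}$. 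So no letter of the form ``fold row $i$, identity on columns'' produces $S$ from a strictly smaller valid set, and your claim that collapsing ``decreases the cardinality by one while preserving validity'' fails exactly in the only case that matters. The degenerate cases ($i=1$ or $j=1$) are also left essentially unargued.

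The paper's proof uses a different reduction that you would need here. Writing $(p,q)$ for the element isolated in both its row and its column, it deletes \emph{both} row $p$ and column $q$, obtaining a valid subset $S'$ of $Q_{m-1}\times Q_{n-1}$ reachable in $\cD_{m-1,n-1}$ by some word $u$ (this invokes the inductive hypothesis with $m'<m$ and $n'<n$, not with $|S'|<|S|$). It then first reaches the two-element set $\{(1,1),(p,q)\}$ (or the appropriate variant when $p=1$ or $q=1$) by applying \emph{twice} a letter inducing a pair of transpositions such as $(1,p);(1,q)$ --- the squaring is what kills the extra states produced by the nondeterminism --- and finally replays $u$, whose letters fix $p$ and $q$, so the isolated state sits still while the rest of $S$ is assembled in the complementary subgrid. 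This ``park the isolated element and recurse on the smaller grid'' idea is absent from your proposal and cannot be replaced by the single-letter collapse you describe.
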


\begin{proof}
 Recall that we can
 assume $m \ge 2$ and $n \ge 2$.
 We may assume that there is neither an empty row nor an empty column in $S$; otherwise $S$ is reachable by Lemma~\ref{lem:induction}.
 It is sufficient to consider the case involving a column, 
 since the case involving a row follows by symmetric arguments.
 Let $(p,q)$ be the only element in column $q$.
 If there are more elements in row $p$, 
 then column $q$ is contained in another column and by Lemma~\ref{lem:induction}, the set $S$ is reachable.

 Let $S'$ be the subset of $Q_{m-1} \times Q_{n-1}$ 
 obtained by removing row $p$ and column~$q$, 
 and renumbering the states to $Q_{m-1} \times Q_{n-1}$ 
 in the way such that $i \in Q_m$ becomes $i-1$ if $i>p$ 
 and otherwise remains the same, 
 and $j \in Q_n$ becomes $j-1$ if $j>q$ 
 and otherwise remains the same.
 We have that $S'$ is a valid subset,
 and by the inductive assumption 
 it is reachable in $\cD_{m-1,n-1}$ 
 by some word $u'$; 
 let $u$ be the word corresponding to $u'$ 
 in the original numbering of the states.
 We consider four cases.
 
 Case $p \ne 1$ and $q \ne 1$:
 State $\{(1,1),(p,q)\}$ is reachable 
 in   $\cD_{m,n}$
  by word $a^2$, 
 where $a\colon (1,p); (1,q)$.
 Then $S$ is reachable by $a^2 u$.

 Case $p = 1$ and $q \ne 1$:
 State $\{(2,1),(1,q)\}$ is reachable 
 in $\cD_{m,n}$ by word $a^2$, 
 where $a\colon (1,2); (1,q)$.
 Then state $(2,1)$ corresponds to state $(1,1)$ 
 after the renumbering, and $S$ is reachable by $a^2 u$.

 Case $p \ne 1$ and $q = 1$:
 This is symmetrical to the previous case.

 Case $p = 1$ and $q = 1$:
 State $\{(1,1),(2,2)\}$ is reachable 
 in $\cD_{m,n}$ by word $a^2$, 
 where $a\colon (1,2); (1,2)$.
 Then state $(2,2)$ corresponds to state $(1,1)$ 
 after the renumbering, and $S$ is reachable by $a^2 u$.
\qed
\end{proof}

\begin{theorem}\label{thm:generalization}
 If for some $h$ every valid subset can be reached
 in $\cD_{h,\binom{h}{\lfloor h/2 \rfloor}}$
 then for  every $m \le h$ and every~$n$,
 every valid subset 
 can be reached in $\cD_{m,n}$.
\end{theorem}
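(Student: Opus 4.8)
The plan is to separate two regimes according to the size of $n$ relative to the Sperner number $\binom{m}{\lfloor m/2\rfloor}$, and to exploit the fact that the hypothesis concerns the \emph{largest} automaton in the relevant family. Write $N=\binom{h}{\lfloor h/2\rfloor}$, so that the hypothesis is that every valid subset is reachable in $\cD_{h,N}$. First I would establish a \emph{downward-transfer} lemma: for every $m\le h$ and every $n\le N$, every valid subset of $Q_m\times Q_n$ is reachable in $\cD_{m,n}$. Then I would extend this to arbitrary $n$ by a reduction argument driven by Sperner's theorem and Lemma~\ref{lem:induction}.

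For the downward transfer, fix $m\le h$ and $n\le N$ and let $S\subseteq Q_m\times Q_n$ be valid. Since $m\le h$ and $n\le N$, the set $S$ is also a valid subset of $Q_h\times Q_N$, hence reachable in $\cD_{h,N}$ by hypothesis, say by a word $w$. The apparent difficulty is that the sub-automaton relation only yields reachability \emph{upward} (from small to large), whereas here I need it in the small automaton; a priori the $w$-computation might pass through states outside $Q_m\times Q_n$, so $w$ cannot simply be replayed in $\cD_{m,n}$. This is exactly where Proposition~\ref{pq} is decisive: along any computation the two projections $\pi_{\col}$ and $\pi_{\row}$ only grow. Consequently every intermediate subset of the $w$-computation has its first-coordinate projection contained in $\pi_{\col}(S)\subseteq Q_m$ and its second-coordinate projection contained in $\pi_{\row}(S)\subseteq Q_n$; that is, the whole computation stays inside $Q_m\times Q_n$. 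Since $\Sigma_{m,n}$ contains a letter for \emph{every} pair in $\cT_m\times\cT_n$, each letter of $w$ can be replaced by the letter inducing the restrictions of its two transformations to $Q_m$ and $Q_n$ (with the identity as a harmless fallback on states mapped out of range, which never occurs on occupied states); the resulting word reaches $S$ in $\cD_{m,n}$.

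To reach arbitrary $n$, fix $m\le h$ and argue by induction; it suffices to treat $n>N$. Let $S\subseteq Q_m\times Q_n$ be valid. If $S$ has an empty row or column, Lemma~\ref{lem:induction} reduces it to a subset of a smaller grid; if some column contains another (equivalently, the column-sets $C_j=\{p:(p,j)\in S\}$ fail to form an antichain in $2^{Q_m}$), Lemma~\ref{lem:induction} reduces $|S|$. The key observation is that one of these always applies once $n>N$: if no column is empty and no column contains another, then $C_1,\dots,C_n$ are $n$ distinct, nonempty, pairwise incomparable subsets of $Q_m$, so Sperner's theorem gives $n\le\binom{m}{\lfloor m/2\rfloor}\le\binom{h}{\lfloor h/2\rfloor}=N$, contradicting $n>N$. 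Hence every valid $S$ with $n>N$ is reducible, and a well-founded induction (on the lexicographic measure $(n,m,|S|)$, say, using only the concrete reductions of Lemma~\ref{lem:induction}) drives every such case either to a strictly smaller instance or down into the regime $n\le N$ already settled by the downward transfer. This yields reachability of all valid subsets of $\cD_{m,n}$ for every $m\le h$ and every $n$, which is the theorem. Note that this route uses only Lemma~\ref{lem:induction}; it is Sperner's bound that explains the precise appearance of $\binom{h}{\lfloor h/2\rfloor}$ in the statement.

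The step I expect to require the most care is the downward transfer, precisely because it runs against the direction of the sub-automaton relation: the whole argument rests on turning the monotonicity of $\pi_{\col}$ and $\pi_{\row}$ from Proposition~\ref{pq} into a \emph{confinement} statement, and then getting the letter-by-letter simulation in $\cD_{m,n}$ exactly right. The only other delicate point is organizational: Lemma~\ref{lem:induction} is phrased with a blanket inductive hypothesis that is not literally dominated by any single well-founded measure, so I would present the extension to large $n$ as one strong induction on an explicit measure and appeal to the specific reductions produced inside the proof of Lemma~\ref{lem:induction}, verifying along the way that each reduced set remains valid (satisfies Condition~(C)) so that the inductive hypothesis genuinely applies.
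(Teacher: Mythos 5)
Your proposal is correct and follows essentially the same route as the paper's proof: settle the cases $m\le h$, $n\le\binom{h}{\lfloor h/2\rfloor}$ using the hypothesis about $\cD_{h,\binom{h}{\lfloor h/2\rfloor}}$, and for $n>\binom{h}{\lfloor h/2\rfloor}$ invoke Sperner's theorem to produce two columns one of which contains the other, then conclude by Lemma~\ref{lem:induction} inside an induction on $(m,n,|S|)$. The one point where you go beyond the paper is the ``downward transfer'': the paper dismisses this step with a one-line appeal to the fact that $\cN_{m,n}$ is a sub-NFA of $\cN_{h,\binom{h}{\lfloor h/2\rfloor}}$, but that relation by itself only transfers reachability \emph{upward} (a word witnessing reachability in the small automaton also works in the large one), which is the wrong direction here. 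Your argument---using Proposition~\ref{pq} to confine every intermediate subset of the witnessing computation to $\pi_{\col}(S)\times\pi_{\row}(S)\subseteq Q_m\times Q_n$, and then replacing each letter of $\Sigma_{h,\binom{h}{\lfloor h/2\rfloor}}$ by the letter of $\Sigma_{m,n}$ that agrees with it on the occupied states (possible because $\Sigma_{m,n}$ contains a letter for \emph{every} pair of transformations)---is precisely the justification the paper leaves implicit, so on this step your write-up is more complete than the original. Your replacement of the lemma's loosely stated disjunctive inductive hypothesis by an explicit lexicographic measure, applied to the concrete reductions inside the proof of Lemma~\ref{lem:induction}, is likewise a legitimate tightening; neither point changes the structure of the argument.
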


\begin{proof}
 This follows by induction on $m$, $n$, and $|S|$.
 
 For $m=1$ this follows by induction on $n$:  
 if $n=1$ then $\cD_{1,1}$ 
 consists of a single valid subset $\{(1,1)\}$,
 and if $n>1$, then we apply   Lemma~\ref{lem:induction}. 
 For $m \le h$ and $n \le \binom{h}{\lfloor h/2 \rfloor}$ 
 this holds by assumption,
 since $\cN_{m,n}$ is a sub-NFA 
 of $\cN_{h,\binom{h}{\lfloor h/2 \rfloor}}$.
 If $|S|=1$, then    $\{(1,1)\}$ is the only valid subset, 
  and it is reachable since it is the initial subset of $\cD_{m,n}$. 

 Let $S$ be a valid subset of $Q_{m} \times Q_{n}$, 
 where $m \le h$ and $n > \binom{h}{\lfloor h/2 \rfloor}$, 
 and assume that every valid subset $S'$ of $Q_{m'} \times Q_{n'}$ 
 is reachable if $m' < m$, or $n' < n$, or $|S'| < |S|$.
 By Sperner's theorem \cite{Spe28}, 
 the maximal number of subsets of an $m$-element set 
 such that none of them contains any other subset is $\binom{m}{\lfloor m/2 \rfloor}$.
 This is not larger than $\binom{h}{\lfloor h/2 \rfloor}$; 
 hence, there exist some columns $j,j'$ with $j\neq j'$
 such that the $j$-th column is contained in $j'$-th column.
 By Lemma~\ref{lem:induction}, the subset $S$ is reachable.
\qed
\end{proof}

\begin{corollary}\label{cor:reachability_m=4}
 Let $1\le m\le4$ and $n\ge1$.
  Then 
 every valid subset can be reached in $\cD_{m,n}$.
\end{corollary}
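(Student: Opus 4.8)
The plan is to invoke Theorem~\ref{thm:generalization} with a concrete small value of $h$, reducing the corollary to a finite computational fact. Since the corollary claims reachability of every valid subset in $\cD_{m,n}$ for all $1\le m\le 4$ and all $n\ge 1$, I would set $h=4$ in the theorem. The theorem then tells us that it suffices to verify that every valid subset can be reached in the single DFA $\cD_{h,\binom{h}{\lfloor h/2\rfloor}}$, which for $h=4$ is $\cD_{4,\binom{4}{2}}=\cD_{4,6}$.

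First I would compute the relevant index: $\binom{4}{\lfloor 4/2\rfloor}=\binom{4}{2}=6$. So the hypothesis of Theorem~\ref{thm:generalization} for $h=4$ reads: every valid subset can be reached in $\cD_{4,6}$. The next step is to observe that this reachability in $\cD_{4,6}$ has already been established computationally and recorded in Table~\ref{tab:reachability}; the entry for $m=4$, $n=6$ is marked as verified, and the accompanying text states that reachability of the bound was checked for these small cases (the $m=n=6$ case being the hardest at 48 days, with the $(4,6)$ case a fortiori covered). Thus the hypothesis of the theorem is satisfied.

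With the hypothesis verified, Theorem~\ref{thm:generalization} immediately yields that for every $m\le 4$ and every $n$, every valid subset can be reached in $\cD_{m,n}$. This covers the full range $1\le m\le 4$, $n\ge 1$ claimed in the corollary, completing the argument. I would also remark that by the commutativity of shuffle and the symmetry of the construction (the roles of $\cK$ and $\cL$, and hence of rows and columns, are interchangeable), the same conclusion applies with the roles of $m$ and $n$ swapped, so reachability holds whenever $\min(m,n)\le 4$.

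The main obstacle here is not in the logical deduction — which is a one-line application of the theorem — but in correctly matching the computational input the theorem demands. Specifically, one must be careful that the computational verification cited from Table~\ref{tab:reachability} genuinely covers the \emph{single} case $\cD_{4,6}$ with the \emph{full} transformation alphabet $\Sig_{4,6}$ (or an alphabet sufficient to realize all needed transformations), since Theorem~\ref{thm:generalization} is stated in terms of the extremal DFAs $\cD_{m,n}$ over $\Sig_{m,n}$. The text's remark that an alphabet of size $50$ suffices for $m=n=4$ is reassuring that the verification was carried out over a rich enough alphabet; confirming that the $(4,6)$ verification is likewise alphabet-complete is the one point where care is required.
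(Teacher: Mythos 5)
Your proposal is correct and matches the paper's own proof exactly: both apply Theorem~\ref{thm:generalization} with $h=4$, using the computationally verified reachability of all valid subsets in $\cD_{4,6}$ (with $\binom{4}{2}=6$) from Table~\ref{tab:reachability}. Your additional remarks on alphabet completeness and row--column symmetry are sensible precautions but do not change the argument.
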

\begin{proof}
 Since we have verified the reachability 
 of all valid subsets for $m=4$ and $n=6=\binom{4}{2}$, Theorem~\ref{thm:generalization} applies with $h=4$.
\qed
\end{proof}

To strengthen this result 
and show reachability for $m \le 5$, 
we need to introduce another concept with permutations.
Let $\varphi$ be any permutation of $m$ rows. 
We split subsets of $Q_m$ (subsets of rows) 
into equivalence classes under $\varphi$.
For $U \subseteq Q_m$,  
$[U]_\varphi = 
\{V \subseteq Q_m \mid V= \varphi^i(U)\text{ for some }i \ge 0\}$ denotes the equivalence class of $U$.
See Tables~\ref{tab:permutational_example_u09}, \ref{tab:permutational_example_u1}, \ref{tab:permutational_example_u4} for examples of subsets whose columns $U$ are partitioned into equivalence classes under some $\varphi$.

For a subset $S$ of $Q_m \times Q_n$, 
by $\col(S,i)$ we denote the subset of $Q_m$ 
contained in the $i$-th column.
Then $\cols(S) = \bigcup_{1 \le i \le n} \col(S,i)$ 
is the set of the subsets in the columns of $S$.

The following lemma assures reachability 
(under an inductive assumption) 
of a special kind of subsets 
whose columns form only full and empty 
equivalence classes under some permutation $\varphi$.

\begin{lemma}\label{lem:permutation}
 Let $\varphi$ be a permutation of $m$ rows.
 Let $S$ be a valid subset of $Q_m~\times~Q_n$ 
 such that $[U]_\varphi \subseteq \cols(S)$ 
 for every $U\in \cols(S)$, 
 and there is a column $V\in \cols(S)$ 
 such that $|[V]_\varphi|\ge 2$.
 Assume that every
 valid subset $S'$ of $Q_{m'} \times Q_{n'}$, 
 where $m' < m$, or $n' < n$, or $|S'| < |S|$, 
 is reachable in $\cD_{m',n'}$.
 Then $S$ is reachable in $\cD_{m,n}$.
\end{lemma}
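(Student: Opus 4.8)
The plan is to reach $S$ in a single step from a strictly smaller, already-reachable predecessor, by using one letter that induces the permutation $\varphi$ on the rows. First I would dispose of the easy cases: if $S$ has an empty row or column, a column (row) contained in another, or a singleton column or row, then $S$ is already reachable by Lemma~\ref{lem:induction} or Lemma~\ref{lem:single_element} together with the inductive assumption. So I may assume that $S$ has none of these features, and that the nontrivial orbit $[V]_\varphi$ guaranteed by the hypothesis is present.

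The core idea is to exhibit a set $S''$ and a letter $a$ with $\delta(S'',a)=S$ and $|S''|<|S|$; then $S''$ is reachable by the inductive assumption, and hence so is $S$. I take $a\colon \varphi;\psi$, where $\psi$ is a column permutation to be chosen. Unwinding the shuffle transition $\delta((p,q),a)=\{(p\varphi,q),(p,q\psi)\}$ column by column yields, for every $j$, the identity $\col(\delta(S'',a),j)=\varphi(\col(S'',j))\cup\col(S'',\psi^{-1}(j))$. Thus it suffices to choose, for each column $j$, a set $\col(S'',j)$ so that $\varphi(\col(S'',j))\cup\col(S'',\psi^{-1}(j))=\col(S,j)$, while keeping the total size strictly below $|S|$.

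I would build this choice from the orbit structure. Since $\cols(S)$ is a union of full $\varphi$-orbits, the columns whose contents lie in a given orbit can be matched cyclically by $\psi$, so that the $\varphi$-image of the content of one column, together with the content routed in from column $\psi^{-1}(j)$, reconstitutes exactly $\col(S,j)$. For a content that is internally \emph{$\varphi$-decomposable} one may even take $\psi=\one$ and let $\col(S'',j)$ be a $\varphi$-root $W$ with $W\cup\varphi(W)=\col(S,j)$, which is strictly smaller; the $\varphi$-closure of $\cols(S)$ is precisely what guarantees that whenever a column content fails to decompose internally, the missing orbit element is available in a neighbouring column and can be supplied through $\psi$. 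Finally I would verify that the resulting $S''$ is valid (satisfies Condition~(C)) and strictly smaller than $S$, and invoke the inductive assumption.

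The main obstacle is the global consistency of this construction. Because a single letter applies $\varphi$ to the rows of every column simultaneously, I cannot adjust one column without perturbing the others: the choices of $\col(S'',j)$ and of the routing permutation $\psi$ must fit together across all columns at once so that each $\varphi(\col(S'',j))\cup\col(S'',\psi^{-1}(j))$ lands on $\col(S,j)$ with neither surplus nor deficit. Controlling this union—especially for column contents containing elements isolated under $\varphi$, where internal decomposition is impossible and one is forced to borrow from another column via $\psi$—while simultaneously guaranteeing $|S''|<|S|$ and the validity of $S''$, is the delicate combinatorial heart of the argument.
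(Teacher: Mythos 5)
Your setup is the right one---reaching $S$ in a single step from a strictly smaller predecessor by a letter $a\colon\varphi;\psi$, and your column identity $\col(\delta(S'',a),j)=\varphi(\col(S'',j))\cup\col(S'',\psi^{-1}(j))$ is exactly the mechanism the paper exploits---but the proof stops precisely where the work begins: you never exhibit $S''$ and $\psi$, and your final paragraph concedes that the ``global consistency'' of the construction is unresolved. That consistency \emph{is} the content of the lemma, so this is a genuine gap, not a deferred detail. Worse, the direction you sketch points the wrong way: shrinking a column internally via a ``$\varphi$-root'' $W$ with $W\cup\varphi(W)=\col(S,j)$ and $\psi=\one$ is generally impossible (any such $W$ must satisfy $W\subseteq\col(S,j)$, and already $U=\{1,3\}$ with $\varphi=(1\,2)(3\,4)$ admits no $W$ with $W\cup\varphi(W)=U$), and nothing in the hypotheses makes column contents decomposable, so ``borrowing the missing pieces through $\psi$'' would have to carry essentially the entire argument.

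The missing idea is that one should not shrink any column at all: one \emph{deletes a whole column} and takes $\varphi$-preimages everywhere else. After your (correct) easy reductions, assume all columns $S_j=\col(S,j)$ are distinct, non-empty, and row $1$ meets at least two columns. Choose $V$ in a non-trivial orbit so that $\varphi^{-1}(V)$ occupies a column $k\neq 1$ (possible because the orbit has at least two members, all occurring as distinct columns by the closure hypothesis). Put
$$S'=\{(i,j)\mid j\neq k,\ i\in\varphi^{-1}(S_j)\},$$
and define $\psi$ by $S_{\psi(j)}=\varphi^{-1}(S_j)$; this is a well-defined permutation of columns precisely because $\cols(S)$ is closed under $\varphi$-orbits and the columns are distinct---this is where the orbit-closure hypothesis enters, a point your sketch never pins down. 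Now in your identity the $\varphi$-term restores each surviving column, $\varphi(\varphi^{-1}(S_j))=S_j$; the routed term is $\col(S',\psi^{-1}(j))=\varphi^{-1}(S_{\psi^{-1}(j)})=S_j$ (or $\emptyset$ when $\psi^{-1}(j)=k$), hence never creates surplus; and the deleted column $k$ is rebuilt purely by routing, since $\col(S',\psi^{-1}(k))=\varphi^{-1}(V)=S_k$. Thus $\delta(S',a_{\varphi,\psi})=S$ with $|S'|=|S|-|S_k|<|S|$, and $S'$ is valid because $k\neq 1$ keeps column $1$ non-empty while, $\psi$ being a bijection, one of the (at least two) columns meeting row $1$ of $S$ differs from $\psi(k)$. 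This choice of $S'$ and $\psi$ dissolves exactly the surplus/deficit bookkeeping you identified as the obstacle.
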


\begin{proof}
 We can assume that no two columns contain the same subset of rows, 
 no column is empty, 
 and the first row contains at least two elements; 
 otherwise $S$ is reachable by Lemma~\ref{lem:induction} 
 or by Lemma~\ref{lem:single_element}.

 Let $S_j=\col(S,j)$ be the $j$-th column 
 of  a valid subset $S$. 
 Thus we have 
 $S=\{(i,j)\mid 1\le j \le n \text{ and } i\in S_j \}.$ 
 Since $|[V]_\varphi|\ge 2$, 
 we can always choose $V$ so that $\varphi^{-1}(V)$ 
 is in a $k$-th column $S_k$ with $k\neq1$. 
 Let $S'$ 
 be the set obtained from $S$ 
 by omitting the states in the $k$-th column
 and by taking the pre-image  of $S_j$ under $\varphi$
 in any other column, that is,
 $$
    S'=\{(i,j)\mid 1\le j \le n,
    j\neq k, \text{ and } i\in \varphi^{-1}(S_j) \}. 
 $$
 Since $k\neq 1$ and the first row of $S$
 contains at least two elements,
 the set $S'$ is valid.
 Since $V$ is non-empty, we have $|S'| < |S|$.
 Let $\psi$ be a permutation
 that maps a column $j$ 
 to the column containing $\varphi^{-1}(S_j)$,
 that is, we have
 $S_{\psi(j)}= \varphi^{-1}(S_j)$.
 Let $t$ be the transformation given by $a_{\varphi,\psi}$.
 Let us show that $S' t  = S$.

 Let $(i,j)\in S'$. 
 Then $i\in \varphi^{-1}(S_j)$, so $\varphi(i)\in S_j$,
 and we have 
 $(i,j)t = \{(\varphi(i),j), (i, \psi(j))\}\subseteq S$.
 Hence $S' t \subseteq S$.

 Now let $(i,j)\in S$. 
 First let  $j\neq k$. 
 Then $i\in S_j$, so $\varphi^{-1}(i)\in \varphi^{-1}(S_j)$. 
 Therefore $(\varphi^{-1}(i),j)\in~S'$.
 Since $(i,j)\in (\varphi^{-1}(i),j) t $,
 we have $(i,j)\in S' t$.
 Now let $j=k$. Then $i\in\varphi^{-1}(V) $
 and $S_{\psi^{-1}(k)} = V$. Thus $(i, \psi^{-1}(k))\in S'$,
 and we have $(i,k)\in (i, \psi^{-1}(k)) t $.
 Hence $S\subseteq S't$.
 Our proof is complete.
\qed
\end{proof}

\newcommand{\X}{\ \circ\ }
\begin{table}[ht]
\caption{A subset and the equivalence classes of columns under $\varphi = [2,3,1,4,5]$.}\label{tab:permutational_example_u09}
\begin{center}
$\begin{array}{|l||c|c|c|c|c|c|c|c|c|}\hline
         & 1& 2& 3& 4& 5& 6& 7& 8&9 \\\hline\hline
1        &\X&  &\X&\X&  &  &\X&  &  \\\hline
2        &\X&\X&  &  &\X&  &  &\X&  \\\hline
3        &  &\X&\X&  &  &\X&  &  &\X\\\hline
4        &  &  &  &\X&\X&\X&  &  &  \\\hline
5        &  &  &  &  &  &  &\X&\X&\X\\\hline\hline
\text{eq}& A& A& A& B& B& B& C& C& C\\\hline
\end{array}$
\end{center}
\end{table}
\begin{table}[ht]
\caption{A subset and the equivalence classes of columns under $\varphi = [1,2,3,5,4]$.}\label{tab:permutational_example_u1}
\begin{center}
$\begin{array}{|l||c|c|c|c|c|c|c|c|}\hline
         & 1& 2& 3& 4& 5& 6& 7& 8\\\hline\hline
1        &\X&\X&  &  &\X&  &  &  \\\hline
2        &\X&  &\X&  &  &\X&  &  \\\hline
3        &\X&  &  &\X&  &  &\X&  \\\hline
4        &  &\X&\X&\X&  &  &  &\X\\\hline
5        &  &  &  &  &\X&\X&\X&\X\\\hline\hline
\text{eq}& A& B& C& D& B& C& D& E\\\hline
\end{array}$
\end{center}
\end{table}
\begin{table}[ht]
\caption{A subset and the equivalence classes of columns under $\varphi = [2,3,4,1,5]$.}\label{tab:permutational_example_u4}
\begin{center}
$\begin{array}{|l||c|c|c|c|c|c|c|c|}\hline
         & 1& 2& 3& 4& 5& 6& 7& 8\\\hline\hline
1        &  &\X&\X&\X&\X&  &  &  \\\hline
2        &\X&  &\X&\X&  &\X&  &  \\\hline
3        &\X&\X&  &\X&  &  &\X&  \\\hline
4        &\X&\X&\X&  &  &  &  &\X\\\hline
5        &  &  &  &  &\X&\X&\X&\X\\\hline\hline
\text{eq}& A& A& A& A& B& B& B& B\\\hline
\end{array}$
\end{center}
\end{table}

\begin{corollary}\label{cor:m=5}
 Let $1\le m \le 5$ and $n\ge1$.
  Then every valid subset can be reached in $\cD_{m,n}$.
\end{corollary}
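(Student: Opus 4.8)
The plan is to prove the statement by induction on $m$, $n$, and $|S|$, in the same spirit as Theorem~\ref{thm:generalization}, but replacing its single Sperner step by the finer reduction provided by Lemma~\ref{lem:permutation}, so that the only computational input required is the verified reachability in $\cD_{5,n}$ for $n\le 7$ (Table~\ref{tab:reachability}) together with Corollary~\ref{cor:reachability_m=4} for $m\le 4$. Thus I would immediately dispose of the cases $m\le 4$ by Corollary~\ref{cor:reachability_m=4}, and of $m=5$ with $n\le 7$ by computation, and then concentrate on $m=5$ with $n\ge 8$, assuming inductively that every valid subset of $Q_{m'}\times Q_{n'}$ is reachable whenever $m'<5$, or $n'<n$, or $|S'|<|S|$.

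Fix such an $S\subseteq Q_5\times Q_n$. First I would apply the reduction lemmas to peel off every easy configuration: if $S$ has an empty row or column, or a row (column) contained in another, then Lemma~\ref{lem:induction} applies; if some row or column is a singleton, then Lemma~\ref{lem:single_element} applies. Hence I may assume that the $n$ columns of $S$ are pairwise distinct, each of size at least $2$, and that $\cols(S)$ is an antichain in the subset lattice of $Q_5$, since two equal columns, or one column contained in another, would produce a containment handled by Lemma~\ref{lem:induction}. By Sperner's theorem the largest antichain in $Q_5$ has $\binom{5}{2}=10$ elements, so for $n\ge 11$ no such $S$ exists and every valid $S$ is already reducible; only the values $n\in\{8,9,10\}$ survive.

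The heart of the argument is then a classification of the antichains $\cols(S)$ of sizes $8$, $9$, and $10$ consisting of subsets of size at least $2$, showing that each of them is invariant under some nontrivial permutation $\varphi$ of the five rows, which is exactly what is needed to invoke Lemma~\ref{lem:permutation}. I would carry this out using the LYM weighting $\sum_{A\in\cols(S)}1/\binom{5}{|A|}\le 1$. For $n=10$ this weighting forces $\cols(S)$ to lie entirely in the levels of $2$- and $3$-subsets, and analysing when equality holds (the admissible families correspond to partitions of $Q_5$ into cliques) shows that the only possibilities are the two full middle levels, each fixed setwise by all of $S_5$. For $n=9$ the same weighting rules out any $4$-subset, and then a short counting argument (each chosen pair forbids the three triples lying above it) shows that no mixed family reaches size $9$; hence $\cols(S)$ consists of $9$ of the $10$ pairs or $9$ of the $10$ triples, that is, $K_5$ with a single edge deleted, which always admits a transposition symmetry. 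For $n=8$ one argues similarly that the level of $4$-subsets is impossible and that the only surviving families in the two middle levels are: eight of the ten triples (or pairs); a single pair (triple) together with all triples (pairs) incomparable to it; and a star of four pairs through a common point together with the four triples of the complementary $4$-set. Each of these is symmetric, and Tables~\ref{tab:permutational_example_u09}, \ref{tab:permutational_example_u1}, and \ref{tab:permutational_example_u4} exhibit a suitable $\varphi$ in representative cases. Once $\varphi$ is found, Lemma~\ref{lem:permutation} produces a strictly smaller valid subset $S'$, reachable by the inductive hypothesis, so $S$ is reachable and the induction closes.

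I expect the classification of the size-$8$ antichains to be the main obstacle: there are several subcases according to how many sets lie in each level and how the chosen pairs sit below the chosen triples, and for each one must both rule out the configurations that cannot attain size $8$ and verify the existence of a fixing permutation. The cases $n=9$ and $n=10$ are comparatively clean, because the families there are forced to be (almost) whole levels; the $n=8$ enumeration, although entirely finite and elementary, is where the bookkeeping is heaviest, and it could alternatively be discharged by a direct computer enumeration of the antichains of the subset lattice of $Q_5$ of these sizes.
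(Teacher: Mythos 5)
Your proposal follows essentially the same route as the paper's proof of Corollary~\ref{cor:m=5}: a minimal counterexample $S\subseteq Q_5\times Q_n$ with $n\ge 8$, reduction via Lemma~\ref{lem:induction} (and, in your variant, Lemma~\ref{lem:single_element}) to the situation where $\cols(S)$ is an antichain of $2$- and $3$-element subsets of $Q_5$ with at least $8$ members, then a finite classification of these antichains --- eight, nine or ten pairs; one triple together with its seven incomparable pairs; the four triples of a $4$-set together with the star of four pairs through the fifth point; and their complements --- each admitting a row permutation $\varphi$ under which the columns form full classes with at least one non-trivial orbit, so that Lemma~\ref{lem:permutation} closes the induction. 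The paper organizes the case analysis by the number $C_3$ of $3$-element columns rather than by $n\in\{8,9,10\}$, and it excludes columns of size $1$ and $4$ by a restriction argument rather than by Lemma~\ref{lem:single_element} and LYM, but your list of surviving configurations coincides with the paper's (Tables~\ref{tab:permutational_example_u09}, \ref{tab:permutational_example_u1}, \ref{tab:permutational_example_u4}), and the permutations exhibited there do the job.

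There is, however, one step that fails as stated: the exclusion of $4$-element columns by the LYM weighting. LYM does settle $n=10$ (a $4$-set pushes the sum above $1$) and $n=9$ (a $4$-set forces equality $8\cdot\tfrac{1}{10}+\tfrac{1}{5}=1$ without the family being a full level), but for $n=8$ a single $4$-set gives $7\cdot\tfrac{1}{10}+\tfrac{1}{5}=\tfrac{9}{10}\le 1$, so there is no contradiction, and your phrase ``one argues similarly'' hides a genuine hole: without a further argument your classification of size-$8$ families would also have to admit antichains containing one $4$-set. The repair is exactly the paper's argument: if $U\in\cols(S)$ with $|U|=4$, every other column is incomparable to $U$, hence contains the unique element $x\notin U$, and the traces of these columns on $U$ again form an antichain, now inside a $4$-element set; by Sperner there are at most $\binom{4}{2}=6$ of them, so $S$ has at most $7$ columns, contradicting $n\ge 8$. (This is the same restriction argument the paper uses to kill singleton columns, where you instead invoke Lemma~\ref{lem:single_element}; that part of your plan is fine.) One further point of care: setwise invariance of $\cols(S)$ under a non-trivial $\varphi$ is not by itself enough for Lemma~\ref{lem:permutation} --- you need a column whose $\varphi$-orbit has size at least $2$ --- but all configurations in your list do provide such an orbit. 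With the $n=8$ repair, your plan is a correct proof, essentially identical to the paper's.
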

\begin{proof}
The proof follows by analysis of valid subsets $S \subseteq Q_5 \times Q_n$, with the aid of Corollary~\ref{cor:reachability_m=4}, Lemma~\ref{lem:induction}, Lemma~\ref{lem:permutation}, and the results from Table~\ref{tab:reachability}.

 Suppose that there is a valid subset $S \subseteq Q_5 \times Q_n$ that is not reachable; 
 let $S$ be chosen so that $n$ 
 is the smallest number and $S$ 
 is a smallest non-reachable subset of $Q_5 \times Q_n$.

 By Corollary~\ref{cor:reachability_m=4} and the choice of $n$, 
 every valid subset $S' \subset Q_{m'} \times Q_{n'}$, 
 where $m' < 5$, or $n' < n$, or $|S'| < |S|$, is reachable. 
 Hence, $S$ has no column containing another column; 
 otherwise, we can apply Lemma~\ref{lem:induction}.
 Since we have verified the reachability of all valid subsets 
 for $m=5$ and $n\le 7$ (Table~\ref{tab:reachability}), 
 we must have $n \ge 8$ and so $S$ has at least 8 distinct columns.
 Obviously there is neither an empty nor a full column. 
 If there is a column $U$ with $|U|=1$ or $|U|=4$, 
 then by Sperner's theorem if $n > \binom{4}{2} = 6$, 
 then $S$ has a column containing another column; 
 hence $S$ can have only columns $U$ with $|U|=3$ or $|U|=2$.

 Let $C_3$ be the number of 3-element columns ($|U|=3$), 
 and $C_2$ be the number of 2-element columns ($|U|=2$).
 We are searching for possible subsets $S$ 
 that do not have a column containing another column,  
 and with $C_3+C_2 \ge 8$.  
 We consider the following six cases.

\smallskip (1)
 Let $C_3=0$.
 If $C_2 = 10$, which implies 
 that $S$ contains all possible 2-element subsets, 
 then under $\varphi = [2,3,4,5,1]$ 
 we have two full and non-trivial equivalence classes. 
 Hence $S$ is reachable from a smaller subset
 by Lemma~\ref{lem:permutation}. 
 If $C_2 = 9$, then without loss of generality 
 let the missing 2-element subset be $\{4,5\}$; 
 see Table~\ref{tab:permutational_example_u09}. 
 Under $\varphi = [2,3,1,4,5]$ 
 we have three full and non-trivial equivalence classes, 
 and  $S$ is reachable by Lemma~\ref{lem:permutation}. 
 Finally, if $C_2 = 8$, 
 then we have two subcases. 
 If the two missing 2-element subsets have a common element, 
 then without loss of generality 
 let them be $\{2,3\}$ and $\{4,5\}$. 
 Under $\varphi = [1,4,5,2,3]$ 
 we have four full and non-trivial equivalence classes,
 and $S$ is reachable by Lemma~\ref{lem:permutation}.
 If they have a common element, 
 then without loss of generality 
 let them be $\{3,4\}$ and $\{4,5\}$. 
 Under $\varphi = [1,2,5,4,3]$ 
 we have six full equivalence classes 
 and two of them are non-trivial.
 Thus $S$ is reachable by Lemma~\ref{lem:permutation}.

\smallskip (2)
 Let $C_3=1$.  The only possible subset, 
 up to permutation of columns and rows, 
 is shown in Table~\ref{tab:permutational_example_u1}. 
 It has all columns with two elements 
 that are not contained in the 3-element column. 
 By Lemma~\ref{lem:permutation} with $\varphi = [1,2,3,5,4]$, it is reachable.

\smallskip (3)
 Let $C_3=2$. 
 A simple analysis reveals 
 that if the 3-element columns have only one common element, 
 then $C_2$ is at most 4. 
 If they have two common elements, then $C_2$ is at most 5.
 Thus in this case, we have $C_2+C_3\le7$.
 
\smallskip (4)
 Let $C_3=3$. Here $C_2$ is at most 4.
 
 \smallskip (5)
 Let $C_3=4$. The only possible subset, 
 up to permutation of columns and rows,
 is shown in Table~\ref{tab:permutational_example_u4}. 
 By Lemma~\ref{lem:permutation} with $\varphi = [2,3,4,1,5]$, 
 it is reachable.

\smallskip (6)
 Let $C_3\ge 5$. 
 These cases are symmetrical to those with $C_3 \le 3$; 
 it is sufficient to consider the complement of $S$.

 Since these cover all the possibilities for set $S$, 
 this set is reachable.
\qed
\end{proof}

\subsection{Proof of Distinguishability}\label{sec:distinguish}
\label{se:distinguishability}

The aim of this section 
is to show that there are regular languages
 defined over a three-letter alphabet 
such that the subset automaton of the NFA for their shuffle
does not have equivalent states.

To this aim let $\cA = (Q,\Sigma,\delta,s,F)$ be an NFA.
We say that a state $q$ in $Q$ is \emph{uniquely distinguishable}
if there is a word $w$ in $\Sigma^*$
which is accepted by $\cA$ from and only from the state $q$,
that is, if there is a word $w$ such that
$\delta(p,w)\in F$ if~and only if $p=q$.
First, let us prove the following two observations.

\begin{proposition}
\label{prop:disting}
 If each state of an NFA $\cA$ is uniquely distinguishable,
 then the subset automaton of $\cA$
 does not have equivalent states.
\end{proposition}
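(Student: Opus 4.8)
The plan is to prove the contrapositive direction directly: assume every state of $\cA$ is uniquely distinguishable, and show that any two distinct subsets $S_1, S_2$ of the subset automaton are distinguishable, i.e.\ there is a word accepted from exactly one of them. Two distinct subsets must differ in at least one element, so without loss of generality there is a state $q \in Q$ with $q \in S_1$ but $q \notin S_2$. The natural idea is to use the uniquely distinguishing word for $q$ to separate $S_1$ from $S_2$.

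First I would fix such a $q$ and let $w$ be the word that is accepted by $\cA$ from and only from $q$, so that $\delta(p,w) \cap F \neq \emptyset$ if and only if $p = q$. Then I would compute acceptance of $w$ from each subset. Since $q \in S_1$, we have $\delta(S_1,w) = \bigcup_{p \in S_1}\delta(p,w)$, and the term for $p=q$ already meets $F$, so $w$ is accepted from $S_1$. Conversely, since $q \notin S_2$, none of the states $p \in S_2$ satisfy $p=q$, so by the uniqueness property $\delta(p,w) \cap F = \emptyset$ for every $p \in S_2$; hence $\delta(S_2,w) \cap F = \emptyset$ and $w$ is rejected from $S_2$. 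This exhibits a word separating $S_1$ and $S_2$ in the subset automaton, which is exactly what is needed.

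The potential subtlety, and the step I would be most careful about, is the direction where $q \notin S_2$: I must use that $w$ is accepted \emph{only} from $q$, not merely accepted from $q$. This is precisely the strength of the notion of unique distinguishability as opposed to ordinary distinguishability, and it is what makes the union over $S_2$ reject $w$. The case analysis on which of $S_1, S_2$ contains $q$ is handled by the ``without loss of generality'' symmetry, since if instead $q \in S_2 \setminus S_1$ the same argument separates them with the roles reversed. No induction or combinatorial estimate is required; the whole argument rests on the elementary observation that acceptance in the subset automaton is the union of acceptances of the member states, so a word that is accepted from exactly one state of $Q$ can always tell apart any two subsets that disagree on that state.
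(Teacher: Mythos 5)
Your proof is correct and follows essentially the same argument as the paper: pick $q$ in the symmetric difference of the two subsets, and use the word accepted from and only from $q$ to separate them. The paper's version is just a terser statement of the same reasoning (and note your opening phrase ``contrapositive'' is a slight misnomer, since you in fact argue the implication directly).
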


\begin{proof}
 Let $S$ and $T$ be two distinct subsets  in $2^Q$.
 Then, without loss of generality,
 there is a state $q$ in $Q$ with $q\in S \setminus T$.
 Since $q$ is uniquely distinguishable,
 there is a word $w$ which is accepted by $\cA$ 
 from and only from $q$. 
 Therefore,  the subset automaton of $\cA$
 accepts $w$ from $S$ and it rejects $w$ from $T$.
 Hence $w$ distinguishes   $S$ and $T$. 
\qed 
\end{proof}

\begin{proposition}
\label{prop:unique}
 Let a state $q$ of an NFA $\cA = (Q,\Sigma,\delta,s,F)$
 be uniquely distinguishable.
 Assume that there is a symbol $a$ in $\Sigma$
 and \emph{exactly one} state $p$ in~$Q$  that goes to $q$ on $a$,
 that is, $(p,a,q)$ is a unique in-transition on $a$ going to $q$.
 Then the state $p$ is uniquely distinguishable as well.
\end{proposition}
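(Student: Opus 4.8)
The plan is to prove unique distinguishability of $p$ by exhibiting an explicit witness word, obtained by prepending the symbol $a$ to the word that already distinguishes $q$. First I would invoke the hypothesis that $q$ is uniquely distinguishable to fix a word $w\in\Sigma^*$ with the property that, for every state $r\in Q$, we have $\delta(r,w)\cap F\neq\emptyset$ if and only if $r=q$. My claim is then that the word $aw$ uniquely distinguishes $p$.

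To verify the claim I would unwind the definition of acceptance from a state, computing for an arbitrary $r\in Q$ whether $aw$ is accepted from $r$. Using the extension of $\delta$ to the domain $2^Q\times\Sigma^*$, we have $\delta(r,aw)=\bigcup_{r'\in\delta(r,a)}\delta(r',w)$, so $aw$ is accepted from $r$ exactly when there is some $r'\in\delta(r,a)$ with $\delta(r',w)\cap F\neq\emptyset$. By the choice of $w$, such an $r'$ exists if and only if $q\in\delta(r,a)$.

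It remains to translate the condition $q\in\delta(r,a)$. This is precisely where the "exactly one in-transition" hypothesis enters: since $(p,a,q)$ is the unique transition on $a$ entering $q$, the condition $q\in\delta(r,a)$ holds if and only if $r=p$. Chaining the two equivalences gives that $aw$ is accepted from $r$ if and only if $r=p$, which is exactly unique distinguishability of $p$.

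There is no genuine obstacle here; the argument is a direct unwinding of the definitions. The only point requiring care is using the uniqueness hypothesis in both directions: it guarantees simultaneously that $aw$ \emph{is} accepted from $p$ (because $q\in\delta(p,a)$, so the run continues to $q$ and then $w$ is accepted) and that $aw$ is \emph{rejected} from every $r\neq p$ (because then $q\notin\delta(r,a)$, so no branch of the $a$-step can reach $q$ and the only state from which $w$ is accepted is missed).
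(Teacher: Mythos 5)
Your proposal is correct and follows essentially the same route as the paper's own proof: both take the witness word $aw$, use $q\in\delta(p,a)$ to show acceptance from $p$, and use the uniqueness of the in-transition to show that no $r\neq p$ can reach $q$ on $a$ and hence rejects $aw$. The only difference is cosmetic: you phrase the two directions as a single chain of equivalences, while the paper argues them separately.
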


\begin{proof}
 Let $w$ be a word which is accepted by $\cA$
 from and  only from $q$.
 The word $a w$ is accepted from $p$
 since $q \in \delta(p,a)$ 
 and $w$ is accepted from $q$.
 Let $r\neq p$.
 Then $q\notin\delta(r,a)$
 since $(p,a,q)$ is a unique in-transition on $a$ going to $q$.
 It follows that the word $w$ is not accepted
 from any state in $\delta(r,a)$.
 Thus $\cA$ rejects $a w$ from $r$,
 so $p$ is uniquely distinguishable.
\qed 
\end{proof}

Now we can prove the following result.

\begin{theorem}
 Let $m,n\ge2$.
 There exist ternary languages $K$ and $L$
 with $\kappa(K)=m$ and $\kappa(L)=n$
 such that the subset automaton of the NFA
 accepting
 $K \shu L$
 does not have equivalent states.
\end{theorem}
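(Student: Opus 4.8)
The strategy is to apply Proposition~\ref{prop:disting}: it suffices to construct ternary DFAs $\cK$ and $\cL$ (with $\kappa(K)=m$, $\kappa(L)=n$) such that every state $(p,q)$ of the associated NFA $\cN$ is uniquely distinguishable. Then the subset automaton $\cD$ has no equivalent states, which is exactly the claim. The natural way to get unique distinguishability for all $mn$ states is to first secure it for one carefully chosen ``corner'' state and then propagate it to every other state using Proposition~\ref{prop:unique}, whose hypothesis is that the state has a \emph{unique} in-transition on some letter.

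\medskip\noindent\textbf{Choice of the automata.} I would pick $\cK$ and $\cL$ each to be a cycle on their states under one letter, say $a\colon (1,2,\ldots,m)$ acting on $\cK$ and some companion action on $\cL$, together with a second letter $b$ that acts as a single ``reset'' or contraction (for instance $b\colon (m\to 1)$-type maps) and a third letter $c$ to break unwanted symmetry and to control the final-state behaviour. The two remaining letters must be arranged so that, in the product NFA $\cN$ with transition $\delta((p,q),x)=\{(\delta_K(p,x),q),(p,\delta_L(q,x))\}$, there is a state -- a corner such as $(m,n)$ or $(1,1)$ -- that is accepted by exactly one state under some short word. Typically one arranges the accepting sets $F_K,F_L$ so that a specific word $w_0$ is accepted from and only from that corner; verifying this base case is a direct computation on the product.

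\medskip\noindent\textbf{Propagation.} Once one state $(p_0,q_0)$ is uniquely distinguishable, I would repeatedly invoke Proposition~\ref{prop:unique}. The key design requirement is that for each state $(p,q)$ we can find a letter $x$ and a state from which $(p,q)$ is the \emph{only} predecessor on $x$. In the product NFA the in-transitions of $(p,q)$ on a letter $x$ come from $(\delta_K^{-1}(p,x),q)$ and $(p,\delta_L^{-1}(q,x))$, so making an in-transition unique amounts to choosing the row- and column-actions so that the relevant preimages are singletons and the two contributions do not collide. A cyclic action gives a unique predecessor in its own coordinate, so walking backwards around the two cycles lets one reach every state of the grid $Q_K\times Q_L$ from the distinguished corner, each step adding one newly uniquely-distinguishable state. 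This is essentially a reachability-in-reverse argument on the $m\times n$ grid.

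\medskip\noindent\textbf{Main obstacle.} The hard part will be engineering the three letters so that \emph{both} the base case (one corner is uniquely distinguishable, which constrains $F_K,F_L$ and forces some asymmetry) \emph{and} the inductive step (a unique in-transition on some letter exists for every state, which forbids the collision $\delta_K(p',x)=p$ with $p'=p$ meeting the column contribution) hold simultaneously, for all $m,n\ge2$ with only three letters. Two letters are not enough to both cycle and reset without creating ambiguous in-transitions, so the interplay of the third letter is delicate; I expect the bulk of the work to be a uniform definition of $\delta_K,\delta_L,F_K,F_L$ as explicit transformations (in the $[1x,2x,\dots]$ notation of the paper) together with a verification that the in-transition on the chosen letter is genuinely unique at each propagation step. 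Degenerate small cases (e.g.\ $m=2$ or $n=2$) may need to be checked separately.
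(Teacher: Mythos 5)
Your overall strategy is exactly the one the paper uses: make one corner state of $\cN$ uniquely distinguishable and propagate unique distinguishability backwards along unique in-transitions via Proposition~\ref{prop:unique}, then conclude with Proposition~\ref{prop:disting}. However, what you have written is a plan rather than a proof. The entire mathematical content of the theorem is the explicit ternary witness together with the verification that from every state of the $m\times n$ grid there is a path of unique in-transitions to the accepting corner, and this is precisely the part you defer (``I expect the bulk of the work to be a uniform definition\ldots''). Without the concrete transformations and the check of the in-transition structure, nothing is established.

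For comparison, the paper resolves the ``delicate interplay'' you worry about with one clean design trick that your sketch does not identify: each letter acts as a cycle (or near-cycle) on one automaton and as a \emph{constant} map on the other. Concretely, $a$ cycles $1\to2\to\cdots\to m\to1$ in $\cK$ but sends every state of $\cL$ to $1$; $b$ sends every state of $\cK$ to $1$ but cycles $\cL$; $c$ maps $1\mapsto2$ and $i\mapsto1$ for $i\ge2$ in $\cK$, and maps everything to $n$ in $\cL$; the final states are $m$ and $n$, so $(m,n)$ is the unique accepting state of $\cN$ and the empty word settles the base case. The point of the constant maps is that for a target $(i,j)$ with $j\neq1$ the column contribution to the $a$-preimage is \emph{empty}, not merely a singleton to be kept disjoint from the row contribution; this is what makes the in-transition from $(i-1,j)$ genuinely unique, and it is much easier to arrange than the collision-avoidance between two nonempty preimages that you describe. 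The resulting subgraph of unique in-transitions ($a$ moves down each column $j\ge2$, $b$ moves $(i,1)$ to $(i,2)$ and walks along the bottom row, and $c$ supplies the single edge $(1,1)\to(2,1)$) reaches $(m,n)$ from every state, which completes the propagation. If you supply these explicit transformations and verify the preimage computations, your argument becomes the paper's proof.
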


\begin{proof}
 Let $m$ and $n$  be arbitrary but fixed integers
 with $m,n\ge2$.
 Let $K$ be accepted by the  DFA 
 $\cK =(\{1,2,\ldots,m\},\{a,b,c\},\delta_K,1, \{m\})$,
 where for each $i$ in $\{1,2,\ldots,m\}$,
 
 \medskip
 $\delta_K(i,a)=i+1$ if $i\le m-1$ and $\delta_K(m,a)=1$;
 
 $\delta_K(i,b)=1$;
 
 $\delta_K(1,c)=2$ and $\delta_K(i,c)=1$ if $i\ge2$. 
 
 \medskip\noindent
 Let $L$ be accepted by the  DFA 
 $\cL=(\{1,2,\ldots,n\},\{a,b,c\},\delta_L,1, \{n\})$,
 where for each $j$ in $\{1,2,\ldots,n\}$,
 
 \medskip
 $\delta_L(j,a)=1$;
 
 $\delta_L(j,b)=j+1$ if $j\le n-1$ and $\delta_L(n,b)=1$;
 
 $\delta_L(j,c)=n$. 
 
 \medskip\noindent
 The DFAs $ \cK $ and $\cL$ are shown in Fig.~\ref{fig:dfaKL}.

 \begin{figure}[h]
 \unitlength 8pt
 \begin{center}\begin{picture}(40,14)(0,2)
\gasset{Nh=2.4,Nw=2.4,Nmr=1.2,ELdist=0.4,loopdiam=1.4}

\gasset{iangle=120}
\node(1')(2,14){$1$}\imark(1')
\node(2')(8,14){$2$}
\node(3')(13,14){$3$}
\gasset{Nframe=n}
\node(4')(18,14){$...$}
\gasset{Nframe=y,Nw=3.4}
\node(5')(23.5,14){$m-1$}
\gasset{Nw=2.4}
\node(6')(29,14){$m$}\rmark(6')
\drawedge[ELdist=.2](1',2'){$a,c$}
\drawedge[ELdist=.4](2',3'){$a$}
\drawedge(3',4'){$a$}
\drawedge(4',5'){$a$}
\drawedge(5',6'){$a$}

\drawedge[ELside=l,curvedepth=0.7,ELpos=40,ELdist=.1](2',1'){$b,c$}
\drawbpedge[ELside=l,ELpos=20,ELdist=.1](3',230,4,1',320,4){$b,c$}
\drawbpedge[ELside=l,ELpos=30,ELdist=.1](5',220,5,1',300,5){$b,c$}
\drawbpedge[ELside=l,ELpos=20,ELdist=.1](6',230,6,1',280,6){$a,b,c$}

\gasset{loopangle=180}
\drawloop(1'){$b$}

\node(1)(11,4){$1$}\imark(1)
\node(2)(17,4){$2$}
\node(3)(22,4){$3$}
\gasset{Nframe=n}
\node(4)(27,4){$...$}
\gasset{Nframe=y,Nw=3.5}
\node(5)(32,4){$n-1$}
\gasset{Nw=2.4}
\node(6)(38,4){$n$}\rmark(6)
\drawedge[ELdist=.4](1,2){$b$}
\drawedge(2,3){$b$}
\drawedge(3,4){$b$}
\drawedge(4,5){$b$}
\drawedge[ELdist=.1](5,6){$b,c$}

\drawedge[ELside=l,curvedepth=0.7,ELpos=40,ELdist=.3](2,1){$a$}
\drawbpedge[ELside=l,ELpos=20,ELdist=.2](3,230,4,1,320,4){$a$}
\drawbpedge[ELside=l,ELpos=30,ELdist=.2](5,220,5,1,300,5){$a$}
\drawbpedge[ELside=l,ELpos=20,ELdist=.1](6,230,6,1,280,6){$a,b$}

\drawbpedge[ELside=l,ELpos=30,ELdist=.2](3,50,4,6,150,5){$c$}
\drawbpedge[ELside=l,ELpos=25,ELdist=.2](2,50,5,6,130,6){$c$}
\drawbpedge[ELside=l,ELpos=20,ELdist=.3](1,50,6,6,110,7){$c$}

\gasset{loopangle=180}
\drawloop(1){$b$}
\gasset{loopangle=0}
\drawloop(6){$c$}

\end{picture}\end{center}
 \caption{The   DFAs $ \cK  $ and $\cL $.}
 \label{fig:dfaKL}
 \end{figure}

 Construct the NFA  $\cN$ for  $K\shu L$ 
 as described in Section~\ref{sec:bound} 
 on page~\pageref{nfaN}.
 The transitions on $a,b,c$ in   $\cN $
 for $m=4$ and $n=5$ are shown in Fig.~\ref{fig:NFAabc}.
 Notice that  each state $(i,j)$ 
 with $2\le i \le m$ and $2 \le j \le n$
 has a unique in-transition on symbol $a$
 and this transition goes from   state $(i-1,j)$;
 see the dashed transitions in Fig.~\ref{fig:NFAabc} (top-left).
 Next, 
 each state $(m,j)$ with $2\le j\le n$
 has a unique in-transition on $b$ which goes from $(m,j-1)$,
 and each state $(i,2)$ with $2\le i\le m$
 has a unique in-transition on $b$ going from $(i,1)$;
 see the dashed transitions in Fig.~\ref{fig:NFAabc} (top-right).
 Finally, the state $(2,1)$ has a unique in-transition on $c$
 going from $(1,1)$; see the dashed transition in Fig.~\ref{fig:NFAabc} (bottom).
 
 The empty word is accepted by   $\cN $ from and  only from  the state   $(m,n)$ 
 since this is a unique accepting state of $\cN$.
 Thus $(m,n)$ is uniquely distinguishable.
 Next, consider the subgraph of unique in-transitions in $\cN$.
 Fig.~\ref{fig:nfa_unique} shows  this subgraph in the case of $m=4$ and $n=5$.
 Notice that from each state of $\cN$,
 the state $(m,n)$ is reachable in this subgraph.
 By Proposition~\ref{prop:unique}, used repeatedly,
 we get that  each state of $\cN$
 is uniquely distinguishable. 
 Hence by Proposition~\ref{prop:disting},
 the subset automaton of $\cN $
 does not have equivalent states.
\qed
\end{proof}
  
 \begin{figure}[h!]
 \vskip-20pt
 \unitlength 10pt
 \begin{center}\begin{picture}(34,28)(-1,14)
\gasset{Nh=1.8,Nw=1.8,Nmr=0.9,linewidth=0.04,loopdiam=1.4}

\node(11')(2,40){$11$}
\node(12')(5,40){$12$}
\node(13')(8,40){$13$}
\node(14')(11,40){$14$}
\node(15')(14,40){$15$}
\node(11'')(20,40){$11$}
\node(12'')(23,40){$12$}
\node(13'')(26,40){$13$}
\node(14'')(29,40){$14$}
\node(15'')(32,40){$15$}

\gasset{loopangle=180}
\drawloop(11'){}

\drawedge(11'',12''){}
\drawedge(12'',13''){}
\drawedge(13'',14''){}
\drawedge(14'',15''){}

\gasset{loopangle=90}
\drawloop(11''){}
\drawloop(12''){}
\drawloop(13''){}
\drawloop(14''){}
\drawloop(15''){}

\drawbpedge(12',220,1.6,11',340,1.6){}
\drawbpedge(13',220,2.5,11',320,2.5){}
\drawbpedge(14',220,2.8,11',300,2.8){}
\drawbpedge(15',220,3.1,11',280,3.1){}
\drawbpedge(15'',220,3.1,11'',280,3.1){}

\node(21')(2,37){$21$}
\node(22')(5,37){$22$}
\node(23')(8,37){$23$}
\node(24')(11,37){$24$}
\node(25')(14,37){$25$}
\node(21'')(20,37){$21$}
\node(22'')(23,37){$22$}
\node(23'')(26,37){$23$}
\node(24'')(29,37){$24$}
\node(25'')(32,37){$25$}

\drawbpedge(22',220,1.6,21',340,1.6){}
\drawbpedge(23',220,2.5,21',320,2.5){}
\drawbpedge(24',220,2.8,21',300,2.8){}
\drawbpedge(25',220,3.1,21',280,3.1){}
\drawbpedge(25'',220,3.1,21'',280,3.1){}

\drawedge(11',21'){}
\gasset{linewidth=0.04, dash={0.12}0 }
\drawedge(12',22'){}
\drawedge(13',23'){}
\drawedge(14',24'){}
\drawedge(15',25'){}
\drawedge(21'',22''){}

\gasset{linewidth=0.04,dash={}{0}}
\gasset{loopangle=180}
\drawloop(21'){}
\drawedge(22'',23''){}
\drawedge(23'',24''){}
\drawedge(24'',25''){}

\drawbpedge(21'',130,1.6,11'',250,1.6){}
\drawbpedge(22'',130,1.6,12'',250,1.6){}
\drawbpedge(23'',130,1.6,13'',250,1.6){}
\drawbpedge(24'',130,1.6,14'',250,1.6){}
\drawbpedge(25'',130,1.6,15'',250,1.6){}

\node(31')(2,34){$31$}
\node(32')(5,34){$32$}
\node(33')(8,34){$33$}
\node(34')(11,34){$34$}
\node(35')(14,34){$35$}
\node(31'')(20,34){$31$}
\node(32'')(23,34){$32$}
\node(33'')(26,34){$33$}
\node(34'')(29,34){$34$}
\node(35'')(32,34){$35$}

%\drawedge[curvedepth=0.5](32',31'){}
\drawbpedge(32',220,1.6,31',340,1.6){}
\drawbpedge(33',220,2.5,31',320,2.5){}
\drawbpedge(34',220,2.8,31',300,2.8){}
\drawbpedge(35',220,3.1,31',280,3.1){}
\drawbpedge(35'',220,3.1,31'',280,3.1){}

\drawedge(21',31'){}
\gasset{linewidth=0.04, dash={0.12}0 }
\drawedge(22',32'){}
\drawedge(23',33'){}
\drawedge(24',34'){}
\drawedge(25',35'){}
\drawedge(31'',32''){}
\gasset{linewidth=0.04,dash={}{0}}

\gasset{loopangle=180}
\drawloop(31'){}
\drawedge(32'',33''){}
\drawedge(33'',34''){}
\drawedge(34'',35''){}

\drawbpedge(31'',130,2.5,11'',230,2.5){}
\drawbpedge(32'',130,2.5,12'',230,2.5){}
\drawbpedge(33'',130,2.5,13'',230,2.5){}
\drawbpedge(34'',130,2.5,14'',230,2.5){}
\drawbpedge(35'',130,2.5,15'',230,2.5){}

\node(41')(2,31){$41$}
\node(42')(5,31){$42$}
\node(43')(8,31){$43$}
\node(44')(11,31){$44$}
\node(45')(14,31){$45$}\rmark(45')
\node(41'')(20,31){$41$}
\node(42'')(23,31){$42$}
\node(43'')(26,31){$43$}
\node(44'')(29,31){$44$}
\node(45'')(32,31){$45$}\rmark(45'')

%\drawedge[curvedepth=0.5](32',31'){}
\drawbpedge(42',220,1.6,41',340,1.6){}
\drawbpedge(43',220,2.5,41',320,2.5){}
\drawbpedge(44',220,2.8,41',300,2.8){}
\drawbpedge(45',220,3.1,41',280,3.1){}
\drawbpedge(45'',220,3.1,41'',280,3.1){}

\drawedge(31',41'){}
\gasset{linewidth=0.04, dash={0.12}0 }
\drawedge(32',42'){}
\drawedge(33',43'){}
\drawedge(34',44'){}
\drawedge(35',45'){}
\drawedge(41'',42''){}
\drawedge(42'',43''){}
\drawedge(43'',44''){}
\drawedge(44'',45''){}
\gasset{linewidth=0.04,dash={}{0}}

\gasset{loopangle=180}
\drawloop(41'){}
\drawbpedge(41',130,2.8,11',210,2.8){}
\drawbpedge(42',130,2.8,12',210,2.8){}
\drawbpedge(43',130,2.8,13',210,2.8){}
\drawbpedge(44',130,2.8,14',210,2.8){}
\drawbpedge(45',130,2.8,15',210,2.8){}
\drawbpedge(41'',130,2.8,11'',210,2.8){}
\drawbpedge(42'',130,2.8,12'',210,2.8){}
\drawbpedge(43'',130,2.8,13'',210,2.8){}
\drawbpedge(44'',130,2.8,14'',210,2.8){}
\drawbpedge(45'',130,2.8,15'',210,2.8){}

% % % % %

\node(11)(11,24){$11$}
\node(12)(14,24){$12$}
\node(13)(17,24){$13$}
\node(14)(20,24){$14$}
\node(15)(23,24){$15$}

\node(21)(11,21){$21$}
\node(22)(14,21){$22$}
\node(23)(17,21){$23$}
\node(24)(20,21){$24$}
\node(25)(23,21){$25$}

\node(31)(11,18){$31$}
\node(32)(14,18){$32$}
\node(33)(17,18){$33$}
\node(34)(20,18){$34$}
\node(35)(23,18){$35$}

\node(41)(11,15){$41$}
\node(42)(14,15){$42$}
\node(43)(17,15){$43$}
\node(44)(20,15){$44$}
\node(45)(23,15){$45$}\rmark(45)

\gasset{linewidth=0.04, dash={0.12}0 }
\drawedge(11,21){}
\gasset{linewidth=0.04,dash={}{0} }
\drawedge(12,22){}
\drawedge(13,23){}
\drawedge(14,24){}
\drawedge(15,25){}

\drawbpedge(14,40,1.6,15,160,1.6){}
\drawbpedge(13,40,2.5,15,140,2.5){}
\drawbpedge(12,40,2.8,15,120,2.8){}
\drawbpedge(11,40,3.1,15,100,3.1){}
\gasset{loopangle=0}
\drawloop(15){}

\drawbpedge(24,40,1.6,25,160,1.6){}
\drawbpedge(23,40,2.5,25,140,2.5){}
\drawbpedge(22,40,2.8,25,120,2.8){}
\drawbpedge(21,40,3.1,25,100,3.1){}
\gasset{loopangle=0}
\drawloop(25){}

\drawbpedge(34,40,1.6,35,160,1.6){}
\drawbpedge(33,40,2.5,35,140,2.5){}
\drawbpedge(32,40,2.8,35,120,2.8){}
\drawbpedge(31,40,3.1,35,100,3.1){}
\gasset{loopangle=0}
\drawloop(35){}

\drawbpedge(44,40,1.6,45,160,1.6){}
\drawbpedge(43,40,2.5,45,140,2.5){}
\drawbpedge(42,40,2.8,45,120,2.8){}
\drawbpedge(41,40,3.1,45,100,3.1){}
\gasset{loopangle=0}
\drawloop(45){}

\drawbpedge(21,130,1.6,11,250,1.6){}
\drawbpedge(31,130,2.5,11,230,2.5){}
\drawbpedge(41,130,2.8,11,210,2.8){}

\drawbpedge(22,130,1.6,12,250,1.6){}
\drawbpedge(32,130,2.5,12,230,2.5){}
\drawbpedge(42,130,2.8,12,210,2.8){}

\drawbpedge(23,130,1.6,13,250,1.6){}
\drawbpedge(33,130,2.5,13,230,2.5){}
\drawbpedge(43,130,2.8,13,210,2.8){}

\drawbpedge(24,130,1.6,14,250,1.6){}
\drawbpedge(34,130,2.5,14,230,2.5){}
\drawbpedge(44,130,2.8,14,210,2.8){}

\drawbpedge(25,105,1.6,15,250,1.6){}
\drawbpedge(35,105,2.5,15,230,4){}
\drawbpedge(45,105,2.3,15,210,4.3){}

\end{picture}\end{center}
 \caption{NFA $\cN $ for $m=4$ and $n=5$; 
 	the transitions on $a$ (top-left), $b$ (top-right), and $c$ (bottom).}
 \label{fig:NFAabc}
 \end{figure}
  
 \begin{figure}[t]
 \unitlength 12pt
 \begin{center}\begin{picture}(20,13)(7,14)
 \gasset{Nh=1.8,Nw=1.8,Nmr=0.9,linewidth=0.04,loopdiam=1.4}
 
% \gasset{Nw=2.2}
 \node(11)(11,24){} %$cba^2b^3$}
% \gasset{Nw=1.8}
 \node(12)(14,24){} %$a^3b^3$}
 \node(13)(17,24){} %$a^3b^2$}
 \node(14)(20,24){} %$a^3b$}
 \node(15)(23,24){} %$a^3$}
 
 %\gasset{Nw=2.2}
 \node(21)(11,21){} %$ba^2b^3$}
% \gasset{Nw=1.8}
 \node(22)(14,21){} %$a^2b^3$}
 \node(23)(17,21){} %$a^2b^2$}
 \node(24)(20,21){} %$a^2b$}
 \node(25)(23,21){} %$a^2$}
 
 \node(31)(11,18){} %$bab^3$}
 \node(32)(14,18){} %$ab^3$}
 \node(33)(17,18){} %$ab^2$}
 \node(34)(20,18){} %$ab$}
 \node(35)(23,18){} %$a$}
 
 \node(41)(11,15){} %$b^4$}
 \node(42)(14,15){} %$b^3$}
 \node(43)(17,15){} %$b^2$}
 \node(44)(20,15){} %$b$}
 \node(45)(23,15){} %$\varepsilon$}\rmark(45)

 \gasset{dash={0.06}0 }
 \drawedge(11,21){$c$}
 
 \gasset{dash={0.12}0 }
 \drawedge(21,22){$b$}
 \drawedge(31,32){$b$}
 \drawedge(41,42){$b$}
 \drawedge(42,43){$b$}
 \drawedge(43,44){$b$}
 \drawedge(44,45){$b$}
 \gasset{dash={}{0} }
 
 \drawedge(12,22){$a$}
 \drawedge(13,23){$a$}
 \drawedge(14,24){$a$}
 \drawedge(15,25){$a$}
 
 \drawedge(22,32){$a$}
 \drawedge(23,33){$a$}
 \drawedge(24,34){$a$}
 \drawedge(25,35){$a$}
 
 \drawedge(32,42){$a$}
 \drawedge(33,43){$a$}
 \drawedge(34,44){$a$}
 \drawedge(35,45){$a$}

 \end{picture}\end{center}
 \caption{The subgraph of unique in-transitions in NFA $\cN$; $m=4$ and $n=5$. }
 \label{fig:nfa_unique}
 \end{figure}

\section{Conclusions}

We have examined the state complexity of the shuffle operation on two regular languages of state complexities $m$ and $n$, respectively, and found an upper bound for it.
We know that this bound can be reached for any $m$ with $1 \le m \le 5$ and any $n \ge 1$, and also for $m=n=6$.
For the remaining values of $m$ and $n$, however, the problem remains open.
Since there exist two languages $K$ and $L$ for which all pairs of states in the subset automaton of the NFA accepting the shuffle $K\shu L$ are distinguishable, the main difficulty consists of proving that all valid states in the subset automaton can be reached for the witness languages.

\section*{Acknowledgments}

We would like to thank an anonymous referee 
for proposing the notions of a uniquely distinguishable state  
and of a subgraph of unique in-transitions
which allow us to simplify the proof of distinguishability.
We are also   grateful for his comments and suggestions
that helped us  improve the presentation of the paper.

\bibliographystyle{splncs03}

\end{document}